\documentclass[sigconf,screen,nonacm]{acmart} 
\AtBeginDocument{%
  }

\usepackage[unicode]{hyperref}

\hypersetup{
	pdfstartview=FitH,
	bookmarksnumbered=true,
	bookmarksopen=true,
	colorlinks,
	linkcolor={violet!90!black},
	citecolor={blue!60!black},
	urlcolor={blue!80!black}
}

\usepackage{amsmath,amsfonts,amsthm}

\usepackage{amssymb}

\usepackage{dsfont}
\usepackage{framed}
\usepackage{soul}
\usepackage{setspace}

\usepackage[linesnumbered,ruled,vlined]{algorithm2e}
\usepackage{enumitem}
\usepackage{url}

\SetKwInput{KwInput}{Input}                
\SetKwInput{KwOutput}{Output}              
\SetKw{Break}{break}
\SetKw{And}{\textbf{and}}

\newcommand{\eps}{\varepsilon}

\newcommand{\Renyi}{{R{\'e}nyi}}
\newcommand{\privdpo}{\texttt{PrivDPO}}

\usepackage{mathtools}

\theoremstyle{plain}
\newtheorem{thm}{Theorem}[section]

\newtheorem{lemma}{Lemma}[section]
\newtheorem{defn}{Definition}[section]

\newtheorem{prop}{Proposition}[section]

\theoremstyle{definition}
\newtheorem{exmp}{Example}[section]

\theoremstyle{remark}

\usepackage{bm}
\usepackage{graphicx}
\usepackage{textcomp}
\usepackage{xcolor}

\usepackage{subfigure}
\makeatletter
\newcommand\captionof[1]{\def\@captype{#1}\caption}
\makeatother

\usepackage{balance}
\usepackage{multirow}

\usepackage{booktabs}

\usepackage[most]{tcolorbox}

\let\originalmiddle=\middle
\def\middle#1{\mathrel{}\originalmiddle#1\mathrel{}}

\begin{document}

\title{Private Direct Preference Optimization for LLM Alignment}

\author{Yangfan Jiang}
\authornote{Work done during an internship at Tongyi Lab, Alibaba Group.}
\affiliation{%
 \institution{National University of Singapore}
 \country{Singapore}
}
\email{jyangfan@u.nus.edu}

\author{Fei Wei}
\affiliation{%
 \institution{Alibaba Group}
 \city{Hangzhou}
 \country{China}
}
\email{feiwei@alibaba-inc.com}

\author{Ergute Bao}
\affiliation{%
 \institution{Inria}
 \city{Saclay}
 \country{France}
}
\email{ergute.bao@inria.fr}

\author{Xiaokui Xiao}
\affiliation{%
 \institution{National University of Singapore}
 \country{Singapore}
}
\email{xkxiao@nus.edu.sg}

\author{Yaliang Li}
\affiliation{%
 \institution{Alibaba Group}
 \city{Bellevue}
 \country{United States}
}
\email{yaliang.li@alibaba-inc.com}

\author{Bolin Ding}
\affiliation{%
 \institution{Alibaba Group}
 \city{Bellevue}
 \country{United States}
}
\email{bolin.ding@alibaba-inc.com}

\begin{abstract}
Direct preference optimization (DPO) is now a standard method for aligning large language models (LLMs) using human preference data. Each DPO example contains a prompt and a pair of candidate model responses. While prompts and responses are often public or model-generated, the relative preference between responses reflects subjective judgments and can reveal sensitive attributes of annotators or end users. Off-the-shelf privacy-preserving approaches are not well matched to this structure, leading to unnecessary noise injection and biased updates in training.

In this paper, we formalize \emph{preference privacy}, a label-DP-style privacy notion for DPO that protects only the relative preference between candidate responses, assuming an adversary who already knows the prompt and responses. We then design \privdpo{}, a DPO variant that enforces preference privacy while remaining compatible with large-scale LLM training. Our main observation is that, for neighboring examples differing only in their preference signal, the gradient difference lies on a one-dimensional preference axis determined solely by the text; all preference information flows through this axis. \privdpo{} adds calibrated randomness only along this axis via an unbiased randomized rescaling of the DPO objective, avoiding per-example gradient operations. Our experiments on three alignment benchmarks and three LLM families show that \privdpo{} consistently achieves strong privacy-utility trade-offs compared with privacy-preserving baselines.
\end{abstract}

\maketitle

\section{Introduction}\label{sec:intro}
Large language models (LLMs) have become foundational building blocks for many real-world applications, including chatbots, coding agents, and health assistants~\cite{bubeck2023sparks,touvron2023llama,gpt4,comanici2025gemini}. 
At the same time, LLMs can produce outputs that are harmful, biased, or otherwise misaligned with human values~\cite{liang2021towards,bai2022training,bai2022constitutional,ouyang2022training}, which limits their usage in security- and safety‑critical settings. 

To address this, modern systems align a pre-trained model using human preference data, most prominently via reinforcement learning from human feedback (RLHF)~\cite{bai2022constitutional,stiennon2020learning,ouyang2022training}, ensuring that model behaviors better reflect human values and produce outputs that are less harmful, biased, or unsafe. 
A representative technique in this line of RLHF work is \emph{direct preference optimization} (DPO)~\cite{rafailov2023direct}, which has been widely adopted for aligning production-scale LLMs~\cite{qwen2024qwen25technicalreport,dubey2024llama,jiang2024mixtral,abdin2024phi,hui2024qwen2}, and is also supported by OpenAI's fine-tuning API for aligning GPT-4.1-series models~\cite{openai-api-platform}. 

The main idea of DPO is to fine-tune a pre-trained LLM on datasets containing human preference signals, thereby calibrating the model's behavior based on positive and negative human feedback. 
Specifically, each DPO training example is a triplet $(x,y_w,y_l)$, where $x$ denotes the input prompt, $y_w$ the preferred response aligned with human values, and $y_l$ the less preferred response that may be harmful or unhelpful to the input prompt $x$.
In practice, such datasets are typically collected by prompting an LLM to produce a pair of candidate responses, after which human annotators or end-users select the preferred option.

\subsection{The Case for Preference Privacy}\label{subsec:intro-motivateion}
While DPO is effective for LLM alignment, its reliance on human preference data creates privacy risks, especially in vertical domains such as healthcare, finance, and legal assistance.
The preference signal, i.e., whether an annotator prefers $y_w$ over $y_l$, can reveal personal judgments or sensitive attributes.
For example, preferences on politically charged or ethically sensitive prompts may expose an individual's stance, beliefs, or group membership, creating risks such as harassment or employment discrimination if mishandled. 
Beyond individual risk, mishandling of sensitive preference data during the DPO process can undermine public trust and damage the credibility of the organizations developing and deploying LLMs. 
Protecting preference information is therefore important for both individual privacy and responsible LLM deployment.

A natural approach is to apply \emph{differential privacy} (DP)~\cite{dinur2003revealing,dwork2004privacy,dwork2006calibrating,dwork2006our}, which protects individual records by adding calibrated noise during training~\cite{abadi2016deep}. However, standard DP is overly conservative in this context, because it protects the entire triplet $(x,y_w,y_l)$, including the input prompt $x$, both candidate responses $y_w$ and $y_l$, and the associated preference signal. In practice, however, input prompts and candidate responses in DPO datasets are typically non-private, since they are generated by LLMs or drawn from publicly available sources. For example, in the TL;DR summarization dataset~\cite{stiennon2020learning} used by OpenAI, the public Reddit post serves as the prompt, the pre-trained LLM generates candidate summaries, and annotators indicate which summary they prefer. In such cases, the truly sensitive element is in the individuals' preference signals, i.e., which response they prefer, rather than the text content of either the prompt or the LLM-generated responses. As a result, applying standard DP would inject noise for non-sensitive components, degrading utility without matching the actual privacy need.

Although prior work on label DP~\cite{esmaeili2021antipodes,ghazi2021deep,busa2023label,esfandiari2022label,jiang2024protecting} addresses a similar challenge in supervised learning, such methods rely on additional assumptions about label distributions or task structures and are therefore not direct drop-in replacements for standard DPO at LLM scale. Other potential approaches, such as randomized response techniques~\cite{warner1965randomized,erlingsson2014rappor}, can privatize preference labels through random flipping, but they introduce bias into DPO optimization and degrade alignment performance. There remains a lack of practical privacy-preserving techniques specifically designed for the unique gradient landscape of DPO.

\subsection{Contributions}\label{subsec:contributions}
Motivated by this, we formalize \emph{preference privacy}, which protects user preference information embedded in human feedback data. 
At a high level, it guarantees that, regardless of the underlying preference signal, outcomes derived from the data remain indistinguishable. Building on this privacy notion, we propose \privdpo{}, an effective and scalable DPO method for LLM alignment with formal preference privacy guarantees.

This notion follows the same indistinguishability structure as label DP, instantiated for preference annotations; our main contribution is to realize this guarantee efficiently for DPO-based LLM alignment. While preference privacy can be viewed as a natural relaxation of DP, achieving it in practical LLM alignment, especially at scale, presents two significant challenges: 
(i) introducing as little randomness as possible to satisfy the privacy notion without unnecessary utility loss, which requires accurately identifying and carefully perturbing the sensitive component in the parameter space that encodes human preferences; and
(ii) ensuring scalability for practical LLM optimization, which requires non-trivial algorithmic design to effectively balance privacy, utility, and efficiency.

\privdpo{} addresses these challenges by leveraging several theoretical insights of preference privacy in the DPO setting, derived from an in-depth analysis of the DPO objective under preference privacy constraints. 
Specifically, we examine the privacy-leakage surface in the DPO training process and identify how user preference information is encoded in model gradients. 
Our analysis reveals that the gradient difference between two otherwise identical samples, differing only in their preference annotations, aligns with a specific axis that captures the preference signal between $y_w$ and $y_l$ given the prompt $x$. 
This axis is determined solely by the textual content of $x$, $y_w$, and $y_l$, without relying on any sensitive preference information. 
In other words, user preference information flows through a one-dimensional subspace of the LLM parameter space, which we refer to as the \emph{preference axis}.
This observation implies that it is sufficient to introduce randomness only along this preference axis to protect user preference information, rather than perturbing the entire model gradient space. 

However, directly instantiating this idea in LLM training, e.g., by computing the preference axis vector for each gradient and perturbing it to mask preference information, is impractical. 
For example, in a production-level LLM with over 30B parameters, the gradient of a single DPO example already requires more than 100~GB of GPU memory, not including the additional cost of optimizer states, e.g., in the case of Adam~\cite{DBLP:journals/corr/KingmaB14}. Moreover, retrieving per-example gradients is incompatible with existing large-scale training frameworks, which shard model gradients and optimizer states across GPUs for scalability and throughput, and therefore cannot obtain a complete single-example gradient on demand. 
As a result, direct gradient manipulation is infeasible for production-scale LLMs.

To overcome these challenges, we further analyze the DPO objective under preference privacy constraints while accounting for model utility. 
We find that unbiased perturbation of gradients along the preference axis is mathematically equivalent to privatizing the DPO objective via \emph{an asymmetric randomized rescaling mechanism.}
This insight leads to our final solution, \privdpo{}, a simple yet effective alignment method that enforces preference privacy without explicit gradient manipulation. 
The core idea is to perturb the DPO objective for each training example using a tailored random rescaling strategy, ensuring preference privacy while preserving unbiasedness. 
This design is fully compatible with modern large-scale LLM training frameworks and requires only minimal code changes. 
Importantly, our analysis is model-agnostic and depends only on the form of the DPO objective, making \privdpo{} broadly applicable to all types of LLM architectures. 
We further show that \privdpo{} yields unbiased gradients with favorable error bounds under the preference privacy model, avoiding the bias of RR-based input perturbation and the high-dimensional noise of standard DP mechanisms such as DP-SGD.

Beyond its algorithmic contribution, our work also offers several practical insights. Empirical results show that \privdpo{} consistently outperforms existing privacy-preserving baselines across diverse LLM families, including  Llama, Pythia, and Qwen, and across model sizes ranging from 3B to 32B parameters. Notably, we demonstrate, for the first time to our knowledge, that it is practical to align large-scale LLMs with up to 32B parameters under a rigorous privacy guarantee for user preferences without notable compromises of training efficiency or model quality.

\section{Preliminaries}
\label{sec:background}

\subsection{Direct Preference Optimization}
\label{sec:background-dpo}
We begin by formalizing large language models (LLMs), then describe LLM alignment via RLHF with a reward model, and finally introduce DPO as a reward-model-free alternative. 
\subsubsection{LLMs}
We consider an autoregressive language model $\pi_{\theta}$ parameterized by $\theta \in \mathbb{R}^d$, where $d$ scales into the billions for modern LLMs. 
Given a prompt $x$, the model generates a response sequence $y = [y_1, y_2, \dots, y_N]$ token by token, with probability
\begin{align*}
    \pi_{\theta}(y\mid x) = \prod_{i=1}^{N} \pi_{\theta}(y_i \mid x,y_{<i}),
\end{align*}
where $y_{<i}:=y_{1:i-1}$ %
denotes previously generated tokens.

\subsubsection{Preference Optimization}\label{subsec:dpo-notations}
We assume a preference dataset $\mathcal{D}=\{(x, y_w, y_l)\}$, where each example is a triplet consisting of a prompt $x$ and a pair of responses $(y_w, y_l)$. 
We follow the standard convention in the DPO literature~\cite{rafailov2023direct}, where the ordering encodes the human preference $y_w \succ y_l$. That is, $(y_w, y_l)$ is not merely a set of candidates, but an ordered pair in which the first element denotes the preferred response. This representation is widely adopted and avoids introducing an additional preference variable.

In our setting, the sensitive information lies precisely in this ordering, i.e., the relative preference between the two responses, while the textual content of $x, y_w, y_l$ is assumed to be non-sensitive.

Let $\pi_{\rm ref}$ denote a {reference model}, which is typically a pre-trained or supervised fine-tuned LLM that has not yet been aligned with human preferences. 
The goal of RLHF-based alignment is to fine-tune a model $\pi_{\theta}$ initialized from $\pi_{\mathrm{ref}}$ to maximize the expected reward while remaining close to the reference model. Formally, this can be written as a KL-regularized optimization problem: 
\begin{align}
    \max_{\pi_\theta} \mathbb{E}_{x\sim\mathcal{D}, y\sim\pi_{\theta}(\cdot \mid x)}\big[ r (x,y)\big] - \beta D_{\rm KL}\left( \pi_{\theta} \middle\Vert \pi_{\rm ref} \right), \label{eq:rlhf-obj}
\end{align}
where $r(x,y)$ is a reward learned from pairwise comparisons based on Bradley-Terry model~\cite{bradley1952rank}. The parameter $\beta$ controls the divergence penalty from $\pi_{\rm ref}$, preventing the aligned model from drifting too far from the base LLM and thereby mitigating the risk of model collapse (e.g., catastrophic degradation in model behavior). 
We refer interested readers to~\cite{schulman2015trust,christiano2017deep,schulman2017proximal,rafailov2023direct,stiennon2020learning,ziegler2019fine,meng2024simpo,ethayarajh2024model,azar2024general} for further details. 

Under mild assumptions, Rafailov et al.~\cite{rafailov2023direct} show that the reward can be implicitly represented in terms of $\pi_{\theta}$:
\begin{align}
    r(x,y)=\beta \log \frac{\pi_{\theta}(y\mid x)}{\pi_{\rm ref}(y\mid x)} + \beta \log Z(x),\label{eq:dpo-reward}
\end{align}
where $Z(x)$ is a partition function independent of $y$. 
This formulation allows the pairwise preference between two responses to be modeled as a probabilistic comparison, following the Bradley-Terry assumption~\cite{bradley1952rank} that the probability of preferring $y_w$ over $y_l$ is given by a logistic function of the difference between their implicit rewards. 
From the implicit reward in Eq.~\eqref{eq:dpo-reward}, the DPO objective for a preference triplet $t = (x, y_w, y_l)$ can be derived as~\cite{rafailov2023direct}:
\begin{align}
    &\mathcal{L}_{\rm DPO}(t;\pi_{\theta};\pi_{\rm ref}) =  \nonumber\\
    &\quad  -\log \sigma \left( \beta \log\frac{\pi_{\theta}(y_w\mid x)}{\pi_{\rm ref}(y_w\mid x)} - \beta \log\frac{\pi_{\theta}(y_l\mid x)}{\pi_{\rm ref}(y_l\mid x)} \right),\label{eq:dpo-los}
\end{align}
where $\sigma(\cdot)$ denotes the sigmoid function. Intuitively, the DPO objective pushes $\pi_{\theta}$ toward assigning higher probability to $y_w$ over $y_l$, while penalizing deviations from $\pi_\mathrm{ref}$.

In practice, DPO training starts from a base model, often with a warm-up supervised fine-tuning (SFT) stage on $(x,y)$ pairs from $\mathcal{D}$. The resulting $\pi_{\rm ref}$ better matches the preference data distribution, making DPO optimization more stable and effective~\cite{rafailov2023direct,ethayarajh2024model,meng2024simpo}.

\subsection{Differential Privacy}
\label{sec:background-DP}
Differential privacy (DP)~\cite{dinur2003revealing,dwork2004privacy,dwork2006calibrating,dwork2006our} is a rigorous framework for protecting individuals' sensitive information. It has been extensively studied, and more recently, in the context of deep learning and LLM fine-tuning~\cite{abadi2016deep,tramer2021differentially,de2022unlocking,bao2025unlocking,lilarge,yudifferentially}. We review its definition. 
\begin{defn}[Differential Privacy~\cite{dwork2006calibrating,dwork2006our}]\label{def:dp}
A randomized mechanism $\mathcal{M}$ satisfies $(\eps, \delta)$-DP, if for any two neighboring datasets $D, D'$ differing in only one record, and for any subset of possible outputs $\mathcal{O}\subseteq Range(\mathcal{M})$, we have
\begin{equation}
\Pr\left[\mathcal{M}\left(D\right)\in\mathcal{O}\right]\leq e^\eps \Pr\left[\mathcal{M}\left(D'\right)\in\mathcal{O}\right]+\delta.\label{eq:def-dp}
\end{equation}
\end{defn}
The parameter $\eps$ is known as the \emph{privacy budget} that controls the privacy and utility trade-off. 
Smaller values of $\eps$ and $\delta$ imply that the output distributions of $\mathcal{M}$ on any pair of neighboring datasets are harder to distinguish, thereby providing stronger privacy guarantees. 
Given a non-private algorithm, the basic idea to achieve DP is to introduce calibrated noise into the algorithm procedure~\cite{dinur2003revealing}, ensuring that altering a single record does not significantly affect the outcome distribution. 
In general, smaller privacy parameters require injecting larger DP noises into the algorithm procedure, which can in turn degrade the accuracy of the algorithm.

\vspace{1mm}
\noindent\textbf{Deep learning with DP.} 
We briefly review standard approaches for achieving DP in deep learning, most notably DP-SGD~\cite{abadi2016deep} and its variants. DP aims to ensure that any single training record has only a negligible influence on the output distribution of the algorithm. 
To this end, DP-SGD bounds each record's influence by clipping the $\ell_2$ norm of per-example gradients and adding noise to the clipped gradients, ensuring that gradients computed from neighboring records are statistically indistinguishable.

DP-SGD has been widely applied in deep learning~\cite{abadi2016deep,tramer2021differentially,de2022unlocking}, and more recently in LLM fine‑tuning~\cite{bao2025unlocking,lilarge,yudifferentially}. 
However, as noted earlier and elaborated in Section~\ref{subsec:theoretical-analysis}, its privacy model and algorithmic design are not well aligned with DPO-based LLM alignment. This motivates our tailored notion of preference privacy (Section~\ref{sec:preference-privacy}) and the corresponding mechanism design (Section~\ref{sec:mechanism-design}). 

\section{Formalizing Preference Privacy}\label{sec:preference-privacy}
This section presents the definition and privacy properties of $\epsilon$-preference privacy, followed by a review of existing first-cut solutions that satisfy this privacy notion. The privacy definition adopts the same notion of indistinguishability used in DP and can therefore be viewed as a natural relaxation of DP, with the protected information specifically tailored to the DPO setting. 

In what follows, we first define neighboring DPO training records that differ only in their preference annotations, and then introduce the privacy notion based on this definition.

\begin{defn}[Preference Neighboring]\label{def:neighboring}
Given two DPO training triplets $t$ and $t'$, we say that $t$ and $t'$ are preference neighbors if and only if they share the same input prompt $x$ and output response pair $\{y_w, y_l\}$, but differ in their preference annotations. 
\end{defn}

\begin{exmp}\label{exmp:neighbor}
Let $t=(x, y_w, y_l)$ be a DPO training example with the preference signal $y_w \succ y_l$.
Define $t'=(x, y_l, y_w)$ with $y_w \prec y_l$, which is identical to $t$ except for the reversed preference signal.
Then $t$ and $t'$ are preference neighbors.
\end{exmp}

Let $\mathcal{T}$ denote the domain of the DPO training example $t$. We define preference privacy as follows.

\begin{defn}[$\epsilon$-Preference Privacy]\label{def:preference-priv}
A randomized mechanism $\mathcal{M}:\mathcal{T}\mapsto\mathcal{R}$ satisfies $\epsilon$-preference privacy if, for all pairs of preference neighboring DPO examples $t$ and $t'$, and for any subset of outputs $O\subseteq \mathcal{R}$, it holds that
\begin{equation}\label{eq:defn-distp}
    \Pr[\mathcal{M}(t)\in O]\leq e^\epsilon\cdot \Pr[\mathcal{M}(t')\in O].
\end{equation}
\end{defn}

We note that Definition~\ref{def:preference-priv} adopts a pure $\epsilon$-style guarantee, in contrast to the $(\epsilon,\delta)$-DP definition in Definition~\ref{def:dp}. This distinction is intentional. In our setting, preference privacy is defined over a binary preference signal, which admits mechanisms (including ours and RR-based baselines) that satisfy pure $\epsilon$-style guarantees without requiring a $\delta$ relaxation.  
By contrast, standard approaches such as DP-SGD operate in high-dimensional parameter spaces and typically rely on Gaussian noise, which only provides $(\epsilon,\delta)$-DP guarantees. We therefore present $(\epsilon,\delta)$-DP in Definition~\ref{def:dp} for completeness, while focusing on pure $\epsilon$-preference privacy for our mechanism design and analysis.

Next, we present the key properties of $\epsilon$-preference privacy and provide its semantic interpretation, clarifying the scope of protected preference information and its privacy implications in DPO.

\subsection{Privacy Properties and Semantics}\label{subsec:privayc-semantics}

\noindent\textbf{Privacy properties and sequential composition.} 
Since the definition of $\epsilon$-preference privacy is inspired by and closely follows well-established privacy notions such as differential privacy and Pufferfish privacy~\cite{kifer2014pufferfish}, it inherits several important privacy properties shared by those notions. 
In particular, $\epsilon$-preference privacy satisfies \emph{convexity} and \emph{transformation invariance} (referred to as the post-processing property in the context of DP), which are two fundamental privacy properties deemed essential for a privacy notion to provide a sound and intuitive guarantee~\cite{kifer2010towards,kifer2012axiomatic,kifer2014pufferfish}.

\begin{prop}[Convexity]
    Let $\mathcal{M}_1$ and $\mathcal{M}_2$ be any two randomized mechanisms with independent sources of randomness and both satisfy $\epsilon$-preference privacy. Denote by $\widetilde{\mathcal{M}}$ a randomized protocol that runs $\mathcal{M}_1$ with probability $q$ and runs $\mathcal{M}_2$ with probability $1-q$. Then, for any $q\in(0,1)$, $\widetilde{\mathcal{M}}$ also satisfies $\epsilon$-preference privacy.
\end{prop}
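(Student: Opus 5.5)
The plan is to unfold the mixture and apply Definition~\ref{def:preference-priv} to each component separately. Fix a pair of preference neighboring examples $t,t'$ in the sense of Definition~\ref{def:neighboring}, and fix an output set $O\subseteq\mathcal{R}$.

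The first step is to write down the output distribution of $\widetilde{\mathcal{M}}$. It flips an independent coin with bias $q$ and then runs either $\mathcal{M}_1$ or $\mathcal{M}_2$. The coin is independent of the input and of the mechanisms' internal randomness, so the law of total probability gives
\begin{equation*}
\Pr[\widetilde{\mathcal{M}}(t)\in O] = q\Pr[\mathcal{M}_1(t)\in O] + (1-q)\Pr[\mathcal{M}_2(t)\in O],
\end{equation*}
and the same identity holds with $t'$ in place of $t$. We take the two mechanisms to share a common output range $\mathcal{R}$; otherwise we view both as mapping into the union of their ranges, which does not affect either guarantee.

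The second step is to apply the $\epsilon$-preference privacy of $\mathcal{M}_1$ and $\mathcal{M}_2$ to the respective terms. This gives $\Pr[\mathcal{M}_i(t)\in O]\le e^\epsilon\Pr[\mathcal{M}_i(t')\in O]$ for $i=1,2$. The weights $q$ and $1-q$ are nonnegative, so taking the convex combination preserves the inequality:
\begin{equation*}
\Pr[\widetilde{\mathcal{M}}(t)\in O]\le e^\epsilon\bigl(q\Pr[\mathcal{M}_1(t')\in O]+(1-q)\Pr[\mathcal{M}_2(t')\in O]\bigr)=e^\epsilon\Pr[\widetilde{\mathcal{M}}(t')\in O].
\end{equation*}
The pair $(t,t')$ and the set $O$ were arbitrary, and preference neighboring is symmetric, so this is exactly Eq.~\eqref{eq:defn-distp} for $\widetilde{\mathcal{M}}$.

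There is no real obstacle in this argument. The only point that needs care is the first step: the mixture decomposition is valid only because the selection coin is independent of the input. This is also why the statement assumes independent randomness. The key property being used is that the definition is a pure multiplicative bound with no $\delta$ term, and such a bound is preserved under nonnegative linear combinations.
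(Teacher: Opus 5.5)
Your argument is correct and is the standard one. The paper gives no explicit proof; it only says convexity is inherited from DP and Pufferfish privacy, where it is proved by exactly this law-of-total-probability decomposition followed by the componentwise $e^\epsilon$ bound with nonnegative weights $q$ and $1-q$.

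One small correction to your closing remark. Your first step uses the independence of the selection coin from the input and from each $\mathcal{M}_i$'s internal randomness, which is implicit in ``runs $\mathcal{M}_1$ with probability $q$''. The statement's assumption that $\mathcal{M}_1$ and $\mathcal{M}_2$ have mutually independent randomness plays no role in the proof.
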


The convexity property ensures that introducing an additional source of uncertainty, i.e., by randomly selecting between mechanisms with the same preference privacy guarantee, never weakens that guarantee.

\begin{prop}[Transformation Invariance]\label{corollary:transformation-inv}
    Let $\mathcal{M}:\mathcal{T}\mapsto\mathcal{R}$ be a mechanism that satisfies $\epsilon$-preference privacy. Then, for any algorithm $\mathcal{A}:\mathcal{R}\mapsto \mathcal{H}$ that cannot access the random bits of $\mathcal{M}$, 
it holds that 
$\mathcal{A}\circ \mathcal{M}:\mathcal{T}\mapsto \mathcal{H}$ also satisfies $\epsilon$-preference privacy.
\end{prop}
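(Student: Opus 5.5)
The plan is the standard post-processing argument for differential privacy, adapted to Definition~\ref{def:preference-priv}. Fix preference neighbors $t,t'$ and a measurable set $H\subseteq\mathcal{H}$. The goal is
\[
\Pr[\mathcal{A}(\mathcal{M}(t))\in H]\le e^\epsilon\Pr[\mathcal{A}(\mathcal{M}(t'))\in H].
\]
The hypothesis that $\mathcal{A}$ cannot access the random bits of $\mathcal{M}$ means $\mathcal{A}$ has its own randomness, independent of that of $\mathcal{M}$. I would handle the deterministic case first and then reduce the randomized case to it.

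\emph{Deterministic $\mathcal{A}$.} Set $O:=\mathcal{A}^{-1}(H)\subseteq\mathcal{R}$. Then $\Pr[\mathcal{A}(\mathcal{M}(t))\in H]=\Pr[\mathcal{M}(t)\in O]$, and likewise for $t'$. Applying inequality~\eqref{eq:defn-distp} to this $O$ gives the claim immediately.

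\emph{Randomized $\mathcal{A}$.} Write $\mathcal{A}(r)=f(r,\omega)$, where $\omega\sim\mu$ is independent of $\mathcal{M}$'s coins and $f$ is deterministic. By independence and Fubini,
\[
\Pr[\mathcal{A}(\mathcal{M}(t))\in H]=\int \Pr[\mathcal{M}(t)\in O_\omega]\,d\mu(\omega),
\qquad O_\omega:=\{r: f(r,\omega)\in H\}.
\]
For each fixed $\omega$, the deterministic case bounds the integrand by $e^\epsilon\Pr[\mathcal{M}(t')\in O_\omega]$. Integrating this bound against $\mu$ and reassembling with Fubini yields $e^\epsilon\Pr[\mathcal{A}(\mathcal{M}(t'))\in H]$. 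An alternative route is to view $\mathcal{A}\circ\mathcal{M}$ as a mixture of the deterministic post-processings $f(\cdot,\omega)\circ\mathcal{M}$ and invoke a continuous version of the Convexity proposition, but the direct integral argument is cleaner.

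The main obstacle is the independence step. Conditioning on $\omega$ must leave the law of $\mathcal{M}(t)$ unchanged, and this is exactly where the assumption that $\mathcal{A}$ cannot see $\mathcal{M}$'s random bits is used. Without it, $\mathcal{A}$ could correlate with $\mathcal{M}$'s noise, for example by undoing the randomization, and the guarantee would fail. The remaining points are routine measurability checks: $f$ should be jointly measurable, so that each $O_\omega$ is a valid event and the integrand is measurable in $\omega$.
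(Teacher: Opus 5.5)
Your proof is correct and is the standard post-processing argument: a preimage $O=\mathcal{A}^{-1}(H)$ for deterministic $\mathcal{A}$, then averaging over $\mathcal{A}$'s independent coins. The paper gives no separate proof of this proposition and says only that the property is inherited from differential privacy and Pufferfish privacy, so your argument is the one it implicitly relies on. Its Privacy Semantics sketch averages over the adversary's kernel $\mu(s\mid r)$ using pointwise probability ratios; your event-based Fubini step is the cleaner form of that step and also covers continuous output spaces without needing densities.
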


The implication of transformation invariance is that, as long as $\mathcal{M}$ satisfies $\epsilon$-preference privacy and its random bits are not leaked, the outputs $\mathcal{A}\circ \mathcal{M}(t)$ and $\mathcal{A}\circ \mathcal{M}(t')$ are indistinguishable for any pair of preference-neighboring $t$ and $t'$, from the view of any computationally unbounded adversary that is simulated by algorithm $\mathcal{A}$. Consequently, the adversary cannot gain more information about the underlying preference than the privacy definition allows, regardless of how the adversary processes the observed outcome from $\mathcal{M}$.

\begin{prop}[Sequential Composition]\label{prop:seq-composition}
Let $\mathcal{M}_1:\mathcal{T}\mapsto\mathcal{R}_1$ satisfy $\epsilon_1$-preference privacy. Let $\mathcal{M}_2:\mathcal{T}\times\mathcal{R}_1\mapsto\mathcal{R}_2$ be such that, for every fixed $r_1\in\mathcal{R}_1$, the mechanism $t\mapsto \mathcal{M}_2(t,r_1)$ satisfies $\epsilon_2$-preference privacy. 
Then 
$
\mathcal{M}_{1,2}(t):=(\mathcal{M}_1(t),\,\mathcal{M}_2(t,\mathcal{M}_1(t)))
$
satisfies $(\epsilon_1+\epsilon_2)$-preference privacy.
\end{prop}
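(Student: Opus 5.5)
The plan is to follow the standard proof of sequential composition for pure differential privacy, adapted to the preference-neighboring relation of Definition~\ref{def:neighboring}. Fix preference neighbors $t,t'$ and a measurable set $O\subseteq\mathcal{R}_1\times\mathcal{R}_2$. For each $r_1\in\mathcal{R}_1$, let the section be $O_{r_1}:=\{r_2:(r_1,r_2)\in O\}$. I will write the output probability of $\mathcal{M}_{1,2}$ by conditioning on the first output:
\begin{equation*}
\Pr[\mathcal{M}_{1,2}(t)\in O]=\int_{\mathcal{R}_1}\Pr[\mathcal{M}_2(t,r_1)\in O_{r_1}]\,d\mu_t(r_1),
\end{equation*}
where $\mu_t$ is the law of $\mathcal{M}_1(t)$. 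This decomposition uses the implicit assumption that, once $r_1$ is fixed, the randomness of $\mathcal{M}_2$ is independent of that of $\mathcal{M}_1$. I will state this assumption explicitly, since it is what makes $t\mapsto\mathcal{M}_2(t,r_1)$ a well-defined mechanism for each fixed $r_1$.

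The first step applies the hypothesis on $\mathcal{M}_2$ pointwise in $r_1$. For every $r_1$ we have $\Pr[\mathcal{M}_2(t,r_1)\in O_{r_1}]\le e^{\epsilon_2}\Pr[\mathcal{M}_2(t',r_1)\in O_{r_1}]$. Integrating this against $\mu_t$ gives a bound of $e^{\epsilon_2}\int g(r_1)\,d\mu_t(r_1)$, where $g(r_1):=\Pr[\mathcal{M}_2(t',r_1)\in O_{r_1}]$ is a measurable function taking values in $[0,1]$.

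The second step, which is the only point needing care, transfers the bound from $\mu_t$ to $\mu_{t'}$. Definition~\ref{def:preference-priv} gives $\mu_t(A)\le e^{\epsilon_1}\mu_{t'}(A)$ only for sets $A$, and I need the corresponding inequality for the integral of $g$. I will extend it by the layer-cake representation:
\begin{equation*}
\int g\,d\mu_t=\int_0^1\mu_t(\{g>s\})\,ds\le e^{\epsilon_1}\int_0^1\mu_{t'}(\{g>s\})\,ds=e^{\epsilon_1}\int g\,d\mu_{t'}.
\end{equation*}
An alternative is to approximate $g$ by simple functions and apply monotone convergence. In the discrete case this step reduces to comparing the sums term by term.

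Combining the two steps yields
\begin{equation*}
\Pr[\mathcal{M}_{1,2}(t)\in O]\le e^{\epsilon_1+\epsilon_2}\int g\,d\mu_{t'}=e^{\epsilon_1+\epsilon_2}\Pr[\mathcal{M}_{1,2}(t')\in O],
\end{equation*}
where the final equality is the same conditioning decomposition applied to $t'$. Because $t,t'$ were an arbitrary pair of preference neighbors and the relation is symmetric, this is exactly $(\epsilon_1+\epsilon_2)$-preference privacy. The main obstacle is the measure-theoretic bookkeeping: measurability of the sections $O_{r_1}$ and of $g$, and the existence of the conditional kernel. If I want to avoid that, I will first give the proof for countable output spaces, where every integral is a sum, and then note that the general case follows by the layer-cake argument above.
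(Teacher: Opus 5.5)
Your proof is correct. It is the standard adaptive-composition argument for pure DP: condition on the first output, apply the $\mathcal{M}_2$ guarantee pointwise in $r_1$, then move the integral from $\mu_t$ to $\mu_{t'}$ using the set-wise guarantee of $\mathcal{M}_1$ (layer-cake). The paper gives no proof of its own here; it only says the property is inherited from DP and Pufferfish privacy, and yours is the argument it is implicitly relying on. The independence of $\mathcal{M}_2$'s coins from $\mathcal{M}_1$'s, which you state explicitly, is genuinely needed and is left implicit in the paper (it is stated only for convexity). It does hold in Algorithm~\ref{alg:privdpo}, since each $u$ is sampled afresh.
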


Here, the adaptive mechanism $\mathcal{M}_2$ is allowed to depend on the previously released output $\mathcal{M}_1(t)$; the condition above requires that, after fixing this auxiliary input, $\mathcal{M}_2$ still satisfies $\epsilon_2$-preference privacy as a mechanism over $t$.

Sequential composition ensures that the theoretical privacy leakage bound for mechanisms with a preference privacy guarantee increases linearly as the mechanism is repeatedly run on the same underlying data. This is particularly useful for accounting the total privacy cost when training involves multiple epochs.

\vspace{1mm}
\noindent\textbf{Interpretation of privacy semantics.}
The preference privacy notion implicitly assumes an adversary with complete knowledge of the text content in a DPO training example, i.e., the text of $x$, $y_w$ and $y_l$, who seeks to infer the preference between $y_w$ and $y_l$. The privacy implication of such privacy notion is that it theoretically bounds the adversary's \emph{information gain} about the underlying human preference after observing the mechanism's output, formally stated in the following theorem.

\begin{prop}[Privacy Semantics]\label{lemma:bound-attacker}
Let $b\in\{0,1\}$ denote the private preference bit, where $b=1$ indicates the input is $t_1=(x,y_w,y_l)$ and $b=0$ indicates input is $t_0=(x,y_l,y_w)$. 
Let $\mathcal{M}$ be an $\epsilon$-preference private mechanism producing output $R=\mathcal{M}(t_b)$.

Let $\mathcal{A}$ be any adversary that analyzes the mechanism's output $R$, and define $S=\mathcal{A}(R)$ as the adversary's observation. 
Then for any observation $s$, it holds that
\begin{align*}
    e^{-\epsilon}\leq \frac{\Pr\left[ S=s\mid b=1 \right]}{\Pr\left[ S=s\mid b=0 \right]}\leq e^\epsilon,
\end{align*}
and therefore for any adversary's prior belief denoted by $p_{\rm adv}:=\Pr\left[b=1\right]$, we have
\begin{align*}
    e^{-\epsilon}\frac{p_{\rm adv}}{1-p_{\rm adv}} \leq \frac{\Pr\left[ b=1 \mid S=s \right]}{\Pr\left[ b=0\mid S=s \right]}\leq e^\epsilon\frac{p_{\rm adv}}{1-p_{\rm adv}}.
\end{align*}
\end{prop}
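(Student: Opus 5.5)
The plan has two steps: first get the likelihood-ratio bound from Definition~\ref{def:preference-priv} together with transformation invariance (Proposition~\ref{corollary:transformation-inv}), then get the posterior-odds bound from Bayes' rule.

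\textbf{Step 1: the likelihood ratio.} Note that $t_1=(x,y_w,y_l)$ and $t_0=(x,y_l,y_w)$ are preference neighbors by Definition~\ref{def:neighboring}, as in Example~\ref{exmp:neighbor}. Since the adversary $\mathcal{A}$ only sees $R$ and not the internal random bits of $\mathcal{M}$, Proposition~\ref{corollary:transformation-inv} shows that $\mathcal{A}\circ\mathcal{M}$ is $\epsilon$-preference private. Fix an observation $s$ and take the output set $O=\{s\}$; if observations are continuous, use a measurable set or densities instead. Applying \eqref{eq:defn-distp} to the ordered pair $(t_1,t_0)$ gives $\Pr[S=s\mid b=1]\le e^\epsilon\Pr[S=s\mid b=0]$. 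Applying it to the reversed pair gives $\Pr[S=s\mid b=0]\le e^\epsilon\Pr[S=s\mid b=1]$, because the definition quantifies over all neighboring pairs and the relation is symmetric.

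If $\Pr[S=s\mid b=0]>0$, dividing these inequalities yields $e^{-\epsilon}\le \Pr[S=s\mid b=1]/\Pr[S=s\mid b=0]\le e^\epsilon$. The two inequalities also show that one of the probabilities vanishes exactly when the other does. Such $s$ occur with probability zero under either hypothesis, so we restrict to observations with positive probability.

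\textbf{Step 2: the posterior odds.} Here the prior $p_{\rm adv}=\Pr[b=1]$ is independent of the mechanism's randomness, and we take $p_{\rm adv}\in(0,1)$. By Bayes' rule,
\[
\frac{\Pr[b=1\mid S=s]}{\Pr[b=0\mid S=s]}=\frac{\Pr[S=s\mid b=1]}{\Pr[S=s\mid b=0]}\cdot\frac{p_{\rm adv}}{1-p_{\rm adv}},
\]
because the normalizer $\Pr[S=s]$ cancels. Multiplying the Step~1 bounds by the positive prior odds gives the stated inequality.

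\textbf{Main obstacle.} This is mostly measure-theoretic care rather than a new idea. One issue is making ``$\Pr[S=s]$'' meaningful for continuous outputs; I would phrase the argument with Radon--Nikodym derivatives or state it for events, noting that the density-ratio bound holds almost everywhere because \eqref{eq:defn-distp} holds for all measurable $O$. The other is justifying that $\mathcal{A}$ is a legitimate post-processing, which is handled by Proposition~\ref{corollary:transformation-inv} even if $\mathcal{A}$ uses its own independent randomness.
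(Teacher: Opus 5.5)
Your proposal is correct and follows essentially the same route as the paper. The paper's sketch averages the pointwise likelihood-ratio bound on $R$ over the adversary's kernel $\mu(s\mid r)=\Pr[S=s\mid R=r]$ using the law of total probability. That averaging is exactly the argument you package by invoking transformation invariance (Proposition~\ref{corollary:transformation-inv}), and the paper then applies Bayes' rule as you do. Your treatment of zero-probability observations and continuous outputs fills in details the paper leaves implicit.
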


\begin{proof}[Proof sketch]
    The proof follows the privacy semantics of Pufferfish privacy~\cite{kifer2014pufferfish}. 
    Let the adversary's randomization be the conditional density $\mu(s\mid r)=\Pr\left[ S=s \mid R=r\right]$, describing how the mechanism's output $r$ is mapped to the observation $s$. 
    Since $\mathcal{M}$ satisfies $\epsilon$-preference privacy, we have 
    \[ e^{-\epsilon}\leq\Pr\left[R=r\mid b=1\right] / \Pr\left[R=r\mid b=0\right]\leq e^\epsilon \] 
    for all $r$. By the law of total probability, averaging these bounds over $\mu(s\mid r)$ preserves the inequality, thus 
    \[e^{-\epsilon}\leq \Pr\left[S=s\mid b=1\right] / \Pr\left[ S=s\mid b=0 \right] \leq e^\epsilon.\] Applying Bayes' rule then gives the stated bound. 
\end{proof}

$\epsilon$-preference privacy also implies a dataset-level variant.

\begin{defn}[$\epsilon$-Preference Privacy, Dataset-Level]\label{def:centralized-pref-priv}
A randomized mechanism $\widehat{\mathcal{M}}$ that takes a dataset as input satisfies dataset-level $\epsilon$-preference privacy if, for any two DPO datasets $D, D'$ differing in only one preference annotation (i.e., by flipping the preference of a single triplet $t \in D$), and for any subset of outputs $\mathcal{O}$, it holds that
\[
\Pr[\widehat{\mathcal{M}}(D) \in \mathcal{O}] \leq e^{\epsilon} \Pr[\widehat{\mathcal{M}}(D') \in \mathcal{O}].
\]
Here, $\widehat{\mathcal{M}}(D)$ denotes the (randomized) output of the mechanism when run on dataset $D$, which in our setting corresponds to the transcript of privatized gradients produced during training.
\end{defn}

Specifically, consider a mechanism $\widehat{\mathcal{M}}$ that adaptively processes the dataset sequentially and outputs a transcript of privatized gradients $\{\mathcal{M}(t_i)\}$ for $t_i \in D$. Here, \emph{adaptive} means that each invocation $\mathcal{M}(t_i)$ may depend on previously released outputs $\{\mathcal{M}(t_j)\}_{j<i}$. 
For neighboring datasets $D,D'$ differing in one preference annotation, let $k$ be the affected index. Releases before step $k$ have identical inputs; at step $k$, the inputs are preference neighbors and the release is $\epsilon$-preference private; after step $k$, all remaining examples are unchanged and later releases depend on the flipped preference only through the protected transcript. Therefore, $\widehat{\mathcal{M}}(D)$ satisfies dataset-level $\epsilon$-preference privacy, assuming each example is processed once in the sequential pass.

In our algorithm (Section~\ref{subsec:algorithm-design}), each DPO gradient release satisfies $\epsilon$-preference privacy. Therefore, the entire private DPO procedure, which updates model parameters using only such privatized outputs, also satisfies dataset-level $\epsilon$-preference privacy.

Finally, note that preference privacy can be viewed as a relaxation of DP. 
A natural first-cut approach is therefore to apply existing DP mechanisms, such as DP-SGD or randomized preference flipping. 
However, these methods are not tailored to DPO under preference privacy and suffer from fundamental technical limitations. 
In particular, they either perturb the entire parameter space unnecessarily or introduce significant bias, leading to degraded alignment performance. 
We defer a detailed discussion of these first-cut solutions to Appendix~\ref{appendix:first-cut-solution}, and demonstrate their limitations through theoretical and empirical analyses in Sections \ref{subsec:theoretical-analysis} and \ref{sec:experiments}, respectively.

\subsection{Threat Model and Applicability}\label{subsec:threat-model} 
We next clarify the threat model, scope of protection, and applicability of preference privacy, and discuss the settings in which enforcing preference privacy is appropriate.

\vspace{1mm}
\noindent\textbf{Threat model.}
We consider an adversary who observes trained models and intermediate checkpoints, similar to the canonical privacy-preserving learning framework of DP-SGD~\cite{abadi2016deep}. In addition, the adversary knows the {prompt and the candidate responses.} The adversary's goal is to {infer an individual user's preference between two candidate responses.}

Note that preference privacy is not designed as an ad-hoc defense against any particular classes of attacks on LLM training data, such as membership inference~\cite{shokri2017membership} or data reconstruction attacks~\cite{carlini2021extracting}. 
Instead, it provides an attacker-agnostic, worst-case guarantee via bounding how much additional information about any individual's preference can be revealed through training. Roughly speaking, $\epsilon$-preference privacy guarantees that the released information can change the adversary's \emph{posterior belief} about a preference by at most a multiplicative factor of $e^{\epsilon}$, limiting the incremental leakage attributable to the use of private preference data.

\vspace{1mm}
\noindent\textbf{Applicability.} 
Preference privacy is tailored to DPO-based alignment workflows in which the human preference signal is sensitive, while prompts and responses are non-sensitive and can be reasonably assumed to be known to the adversary. 
As discussed earlier in Section~\ref{subsec:intro-motivateion}, this assumption holds in many practical DPO settings. 
In common DPO datasets, prompts are typically drawn from public, non-proprietary sources, and candidate responses are generated by a pre-trained LLM. 
For example, in summarization~\cite{stiennon2020learning} and dialogue alignment benchmarks~\cite{bai2022training}, only the annotator's preference may reveal sensitive personal judgments or attributes. 
In such settings, preference privacy captures the dominant privacy risk and provides a more accurate and less conservative guarantee than standard DP.

\vspace{1mm}
\noindent\textbf{Limitations.}
Preference privacy does not apply to all alignment scenarios. If prompts or responses themselves contain sensitive or proprietary information, then protecting only the preference signal is insufficient. That said, our approach is complementary to existing DP-based methods~\cite{abadi2016deep,yudifferentially,lilarge,bao2025unlocking} and is intended for settings in which sensitivity is primarily confined to the preference signal.

\vspace{1mm}
\noindent\textbf{Relation to label DP.}
Preference privacy is closely related to label DP~\cite{esmaeili2021antipodes,ghazi2021deep}, where features are public and only labels are sensitive. In DPO, the prompt and candidate responses play the role of public features, while the preference annotation is the private label. Thus, our preference-neighboring relation can be viewed as a label-DP-style formulation specialized to DPO preference data. 
This differs from triplet-level DP, where neighboring datasets may differ in the entire DPO triplet, including the prompt, responses, and preference annotation. 
Our contribution is not the privacy definition itself, but the mechanism: PrivDPO exploits the DPO objective to perturb a one-dimensional preference-dependent scalar rather than the raw label or the full gradient.

\section{Private Direct Preference Optimization}
\label{sec:mechanism-design}

Off-the-shelf first-cut solutions for protecting preference annotations fall into two categories. The first applies standard DP-SGD to the DPO objective, which protects the entire triplet $(x,y_w,y_l)$ and therefore gives a strong, triplet-level guarantee but requires high-dimensional gradient perturbation. The second uses randomized response (RR)-based input perturbation to randomize the preference signal before DPO training, which matches the preference-level privacy unit but introduces bias into the DPO optimization. 
We briefly discuss these baseline solutions in Section~\ref{subsec:privayc-semantics}, with details deferred to Appendix~\ref{appendix:first-cut-solution}.
Both issues severely degrade alignment performance even under moderate privacy guarantees. The proposed \privdpo{} is designed to overcome these limitations by perturbing the intermediate preference-dependent term in the DPO objective, rather than the full gradient or the raw preference label, while offering formal preference privacy. 

In what follows, we first outline the key intuition behind \privdpo{}, followed by its design details and a formal analysis of its privacy and utility guarantees.

\subsection{Rationale}\label{subsec:rationale}
The design of \privdpo{} is motivated by a fundamental observation about how preference information is captured by DPO training gradients.
Consider a pair of neighboring DPO training examples, $t=(x, y_w, y_l)$ and $t'=(x, y_l, y_w)$, which differ only in their preference signal. 
For a fixed LLM parameter $\pi_{\theta}$, we define the DPO gradient difference as
\begin{align*}
     \Lambda = \frac{\partial \mathcal{L}_{\rm DPO} (t;\pi_{\theta};\pi_{\rm ref})}{\partial \theta} - \frac{\partial \mathcal{L}_{\rm DPO} (t';\pi_{\theta};\pi_{\rm ref})}{\partial \theta} \in \mathbb{R}^d.
\end{align*}
This DPO gradient difference $\Lambda$ has a unique geometric structure that can be exploited for privacy-preserving mechanism design. 
Specifically, $\Lambda$ lies along a specific axis in parameter space determined by the textual content of the triplet, i.e., $x$, $y_w$, and $y_l$; we term this the \emph{preference axis}.

This observation reveals a unique pattern of privacy leakage in DPO training. 
At a high level, privacy requires the algorithm's output distribution to remain indistinguishable when the sensitive component, i.e., the preference annotation, is flipped. 
For DPO, such a flip does not arbitrarily change the gradient in the full parameter space; for a fixed prompt and response pair, it only changes the scalar coefficient along the preference axis, while the orthogonal components remain unchanged. 

Intuitively, this means that the adversary's ability to infer preference information is limited to this one-dimensional subspace; the remaining orthogonal subspace is free from privacy concerns. 
This insight enables a mechanism that injects randomness only along the sensitive preference axis, rather than perturbing the input (i.e., raw preference labels) or the intermediate state (i.e., full gradients).

\begin{figure}[!t]
    \centering
    \hspace{-3.5mm} 
    \includegraphics[width=0.9\linewidth, keepaspectratio]{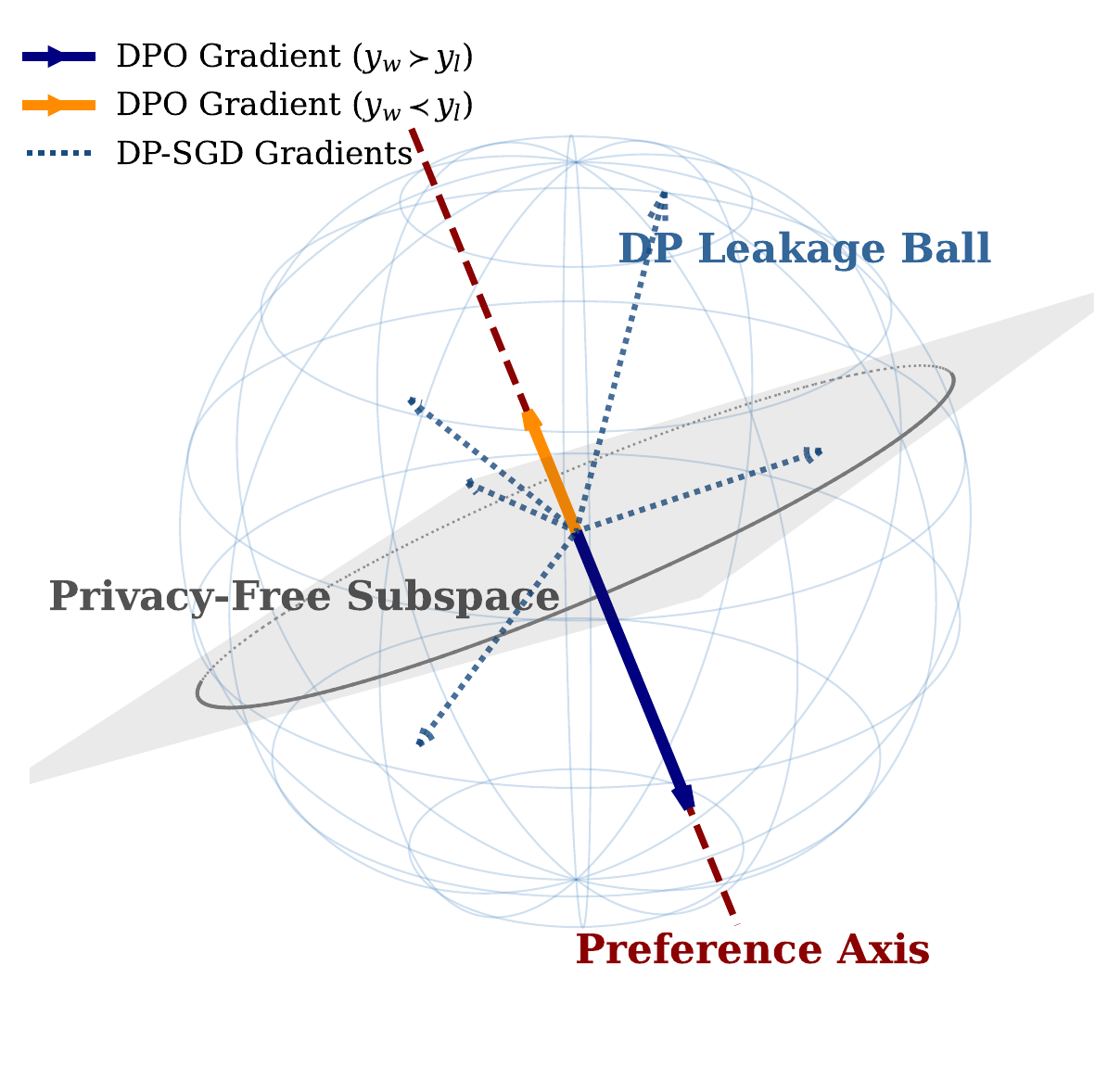}
    \vspace{-9mm} 
    \caption{Illustration of the privacy leakage surface.}
    \label{fig:intuition}
    \vspace{-3mm}
\end{figure}

\subsection{Understanding the Privacy Leakage Surface}\label{sec:sensitivity-analysis}
We now examine the geometric structure of the preference leakage surface in DPO.
Specifically, we analyze how an adversary might exploit the gradient to infer preference information from a DPO training example.
Let $\boldsymbol{v}:=\nabla_\theta\log\pi_\theta(y_w\mid x) - \nabla_\theta\log\pi_\theta(y_l\mid x)$. For a DPO example $t=(x,y_w,y_l)$, differentiating the DPO objective $\mathcal{L}_{\rm DPO}$ with respect to the LLM parameter $\theta$ yields
\begin{align}
    -\beta\cdot \sigma\left( \beta\log \frac{\pi_{\theta}(y_l \mid x)}{\pi_\text{ref}(y_l \mid x)} - \beta\log \frac{\pi_{\theta}(y_w \mid x)}{\pi_\text{ref}(y_w \mid x)}\right) \cdot \boldsymbol{v},\nonumber
\end{align}
where $\sigma(\cdot)$ denotes the sigmoid function. 
For $t'=(x,y_l,y_w)$, the corresponding gradient can be expressed as:
\begin{align*}
    -\beta\cdot \sigma\left( \beta\log \frac{\pi_{\theta}(y_w \mid x)}{\pi_\text{ref}(y_w \mid x)} - \beta\log \frac{\pi_{\theta}(y_l \mid x)}{\pi_\text{ref}(y_l \mid x)}\right) \cdot(- \boldsymbol{v}).
\end{align*}
Since the sigmoid function satisfies $\sigma(z)=1-\sigma(-z)$, defining $\psi:=\sigma\left( \beta\log \frac{\pi_{\theta}(y_l \mid x)}{\pi_\text{ref}(y_l \mid x)} - \beta\log \frac{\pi_{\theta}(y_w \mid x)}{\pi_\text{ref}(y_w \mid x)}\right)$, we have
\begin{align*}
    \Lambda 
    =& -\beta  \psi \cdot \Big[\nabla_\theta\log\pi_\theta(y_w\mid x) - \nabla_\theta\log\pi_\theta(y_l\mid x)\Big] \\
    & \;\; + \beta  (1-\psi) \cdot \Big[\nabla_\theta\log\pi_\theta(y_l\mid x) - \nabla_\theta\log\pi_\theta(y_w\mid x) \Big]  
    \\
    =& -\beta \cdot \Big[\nabla_\theta\log\pi_\theta(y_w\mid x) - \nabla_\theta\log\pi_\theta(y_l\mid x)\Big]=-\beta\boldsymbol{v}.
\end{align*}

The above expression shows that a preference flip changes the DPO gradient only along the preference axis $\boldsymbol{v}$. This is related to label-DP-style relaxations in its privacy semantics: the text content plays the role of public features, while the preference ordering is the sensitive annotation. The key difference is algorithmic. General label-DP mechanisms do not rely on such a gradient structure, whereas in DPO, for a fixed $(x,y_w,y_l)$, the two preference-neighboring gradients remain on the same one-dimensional line determined by $\boldsymbol{v}$. 
This property follows from the algebraic form of the DPO loss, rather than from model architecture, and therefore holds even when $\pi_\theta$ is a nonlinear LLM.

The geometric intuition is illustrated in Figure~\ref{fig:intuition}. Compared with standard DP, which perturbs full gradients, and label-DP-style input perturbation, which randomizes the raw preference label, \privdpo{} exploits the DPO-specific structure above: a preference flip only affects the gradient within the one-dimensional subspace spanned by $\boldsymbol{v}$. This motivates perturbing the intermediate preference-dependent scalar rather than the full gradient or the raw preference label.

\subsection{Algorithm Design}\label{subsec:algorithm-design}

\begin{algorithm*}[!t]
\caption{\small Private DPO via Randomized Objective Rescaling}
\label{alg:privdpo}
\KwInput{Dataset $D=\{(x^{(i)}, y_w^{(i)},y_l^{(i)})\}_{i=1}^{n}$; policy model $\pi_\theta$; reference model $\pi_{\rm ref}$; DPO hyperparameter $\beta$; DPO objective function $\mathcal{L}_{\rm DPO}$; privacy parameter $\epsilon$; learning rate $\eta$.}
\KwOutput{Model parameters ${\theta}$.}
$\mathcal{B}:=\{B_1,...,B_k\}$\tcp*{Randomly permute $D$ and partition it into disjoint mini-batches $\mathcal{B} = \{B_1, \ldots, B_k\}$}
\ForEach{$B \in \mathcal{B}$}{
    \ForEach{{\rm triplet} $t=(x,y_w,y_l)\in B$}{
    $\psi \gets \texttt{sg}\left[\sigma \left(\beta\log\frac{\pi_{\theta}(y_l\mid x)}{\pi_{\rm ref}(y_l \mid x)} - \beta\log\frac{\pi_{\theta}(y_w\mid x)}{\pi_{\rm ref}(y_w \mid x)}\right)\right]$ \tcp*{$\texttt{sg}[\cdot]$: stop-gradient operator; $\sigma(\cdot)$: sigmoid function} \label{algline:compute-psi}
    Sample $u$ uniformly at random from $[0,1]$\;
    $\tilde{w} \gets 
    \left\{
    \begin{array}{ll}
       1+\frac{1}{\psi}\left(1-\frac{e^\epsilon-2}{e^\epsilon-1}\right), & \text{if } u\leq \frac{e^\epsilon}{e^\epsilon+1},\\
      1-\frac{1}{\psi}\left(1+\frac{1}{e^\epsilon-1}\right), & \text{otherwise.}
    \end{array}
    \right.$\;\label{algline:random-weight}
    
    $\tilde{\mathcal{L}}_t \gets {\tilde{w}}\cdot \mathcal{L}_{\rm DPO}(t;\pi_\theta;\pi_{\rm ref})$ \tcp*{Rescale DPO objective with the random weight $\tilde{w}$} \label{algline:obj-perturb}
    $\boldsymbol{g}_t \gets \nabla_\theta \tilde{\mathcal{L}_t}$\; \label{algline:get-gradient}
    }
    $\boldsymbol{g}\gets\frac{1}{|B|}\sum_{t\in B} \boldsymbol{g}_t$\; \label{algline:agg-grad}
    $\theta \gets \theta - \eta \boldsymbol{g}$ \tcp*{Update model parameters by gradient descent}
}
\Return ${\theta}$\;
\end{algorithm*}

Consider an adversary who has access to the textual content of a DPO training example $(x, y_w, y_l)$, and is able to observe its corresponding gradient.
The adversary's goal is to infer whether $y_w \succ y_l$ or $y_w \prec y_l$.
The only exploitable clue for this inference is the direction of the gradient along the preference axis, which is uniquely defined by the neighboring preference pair
$\boldsymbol{v}:=\nabla_\theta\log\pi_\theta(y_w\mid x) - \nabla_\theta\log\pi_\theta(y_l\mid x)$. 
Intuitively, to prevent the adversary from distinguishing whether the input is $(x, y_w, y_l)$ or $(x, y_l, y_w)$, it suffices to randomize the directional information along the preference axis $\boldsymbol{v}$. 
This targeted perturbation conceals the preference information embedded in the gradient while avoiding the excessive utility loss caused by isotropic noise across the full parameter space.

However, the magnitude of the gradient along the preference axis encodes valuable alignment information that must be preserved as much as possible for training utility. 
Specifically, for a given preference $y_w \succ y_l$, the DPO gradient can be expressed as $-\beta \psi \boldsymbol{v}$, where $\beta$ is a DPO hyperparameter, and the scalar 
\begin{align*}
    \psi = \sigma\left( \beta\log \frac{\pi_{\theta}(y_l \mid x)}{\pi_\text{ref}(y_l \mid x)} - \beta\log \frac{\pi_{\theta}(y_w \mid x)}{\pi_\text{ref}(y_w \mid x)}\right)
\end{align*}
represents how strongly the model currently prefers $y_l$ over $y_w$. 
We refer to $\psi$ as the \emph{preference intensity}. 
When the model already assigns relatively higher likelihood to $y_w$ than to $y_l$, i.e., $\log \pi_\theta(y_w \mid x) > \log \pi_\theta(y_l \mid x)$, $\psi$ becomes relatively small, implying only a minor update step for that example. 
Hence, a desirable privacy-preserving DPO should retain this intensity information as faithfully as possible while ensuring preference privacy.

\vspace{1mm}
\noindent\textbf{Unbiased intensity perturbation mechanism.} 
To achieve this balance, we devise an unbiased perturbation mechanism on the preference intensity $\psi$ that satisfies $\epsilon$-preference privacy while maintaining the expected value of $\psi$. 
Unlike RR-based label-DP input perturbation, which randomizes the discrete preference annotation before computing the loss, our mechanism keeps the DPO objective intact up to the preference-dependent scalar $\psi$ and applies randomness directly to this intermediate quantity. Although the private annotation is binary, its effect on the DPO gradient appears as one of two scalar coefficients separated by a constant. 
Specifically, by exploiting the property of the sigmoid function $\sigma(z) = 1 - \sigma(-z)$, we can verify that the effect of flipping the preference on the DPO gradient can be viewed as shifting the preference intensity from $\psi$ to $\psi-1$. This corresponds to a change in the gradient from $-\beta \psi \boldsymbol{v}$ to $-\beta (\psi-1) \boldsymbol{v}$. Hence, given the underlying preference axis, the only exploitable information available to an adversary is the magnitude of the preference intensity, which can take exactly two possible values separated by a difference of $1$. 
This unique structure unlocks greater flexibility for designing private yet unbiased mechanisms than is available in the general numeric setting of local differential privacy \cite{duchi2018minimax,wang2019collecting}, where inputs range over a continuous interval (see Appendix~\ref{appendix:variant-privdpo}).

A direct approach would be to ensure the privacy guarantee by bounding the likelihood ratio of outputs within $e^\epsilon$, i.e., making the true value $\psi$ at most $e^\epsilon$ times more likely to be output than any possible alternative value $\psi'$ in the output range. 
However, such a direct approach introduces bias into the perturbed estimates. 
To address this issue, we design a mechanism that never outputs $\psi$ exactly, but instead samples around it asymmetrically to maintain unbiasedness. 
Specifically, with higher probability, it outputs a value slightly larger than $\psi$, and with lower probability, a value slightly less than $\psi - 1$.
By carefully calibrating these offsets, we ensure that the expectation of the perturbed output equals $\psi$, thereby preserving unbiasedness while satisfying $\epsilon$-preference privacy.
Formally, the mechanism is defined as:
\begin{align*}
    \mathcal{M}_{\rm unbias}(\psi)=    \left\{
    \begin{array}{ll}
       \psi + \frac{1}{e^\epsilon-1}, & \text{with probability } \frac{e^\epsilon}{e^\epsilon+1},\\
      \psi- 1-\frac{1}{e^\epsilon-1}, & \text{with probability } \frac{1}{e^\epsilon+1},
    \end{array}
    \right.
\end{align*}
where the offset is set to $\frac{1}{e^\epsilon-1}$ to ensure unbiasedness. 

Let $\mathcal{M}_{\rm grad}(t)= -\beta\mathcal{M}_{\rm unbias}(\psi)\boldsymbol{v}$ denote the randomized gradient generated by this process. 
It can be verified that (i) $\mathbb{E}[\mathcal{M}_{\rm unbias}(\psi)] = \psi$, and thus $\mathcal{M}_{\rm grad}$ is an unbiased estimator of the DPO gradient; and (ii) $\mathcal{M}_{\rm grad}$ satisfies the indistinguishability requirement in Definition~\ref{def:preference-priv}, thereby ensuring $\epsilon$-preference privacy. 
We defer the formal privacy and utility analysis to Section~\ref{subsec:theoretical-analysis}.

A remaining practical concern is that directly applying this perturbation requires fetching and manipulating each gradient individually.
For large-scale LLMs with tens of billions of parameters, this is both computationally and memory intensive: a single-precision gradient for a 32B-parameter model can exceed 100~GB of GPU memory, with additional overhead from model parameters and optimizer states, far beyond current GPU capacities. 
Moreover, modern training frameworks shard models, gradients, and optimizer states across GPUs and therefore do not natively support per-example gradient operations. 
Implementing per-example operations would require reverting to pure data parallelism, placing all gradients and states onto a single GPU, making training prohibitively slow and impractical for large models. 
To address these bottlenecks, we next introduce an equivalent but far more efficient approach that exploits the geometric structure of DPO gradients, avoiding explicit per-example gradient manipulation.

\vspace{1mm}
\noindent\textbf{Rescaling the DPO objective.} 
We observe that flipping the preference signal in DPO training data only changes the direction and scale of the resulting gradient. 
In other words, the pair of corresponding gradients are symmetric about the origin, when ignoring their preference intensities.
Perturbing the preference intensity $\psi$ is therefore equivalent to applying a randomized rescaling to the DPO objective itself before backpropagation. 
Formally, when perturbing the preference intensity from $\psi$ to $\tilde{\psi}$, the gradient changes from $-\beta \psi \boldsymbol{v}$ to $-\beta \tilde{\psi}\boldsymbol{v}$, which can be written as:
\begin{align*}
-\beta\tilde{\psi}\boldsymbol{v}=-\frac{\tilde{\psi}}{\psi}\psi \beta\boldsymbol{v}=\frac{\tilde{\psi}}{\psi}\nabla_{\theta} \mathcal{L}_{\rm DPO}=\nabla_{\theta} \left(\texttt{sg}\left[{\tilde{\psi}}{\psi^{-1}}\right]\mathcal{L}_{\rm DPO}\right),
\end{align*}
where $\texttt{sg}[\cdot]$ denotes the stop-gradient operator. 
Hence, perturbing the preference intensity of the gradient along the preference axis is equivalent to applying a randomized rescaling to the DPO objective. 

\vspace{1mm}
\noindent\textbf{The \privdpo{} algorithm.}
This insight motivates our final private DPO solution, \privdpo{}, presented in Algorithm~\ref{alg:privdpo}. 
We follow the standard DPO training protocol~\cite{rafailov2023direct}: the dataset is randomly permuted and processed sequentially in mini-batches for a single epoch. 
For each DPO training example $t=(x,y_w,y_l)$, we first compute its preference intensity $\psi$ (line~\ref{algline:compute-psi}). 
This step requires only a forward pass and is already performed as part of the DPO objective computation; thus, retrieving $\psi$ only introduces negligible overhead. In our implementation, we detach the preference intensity $\psi$ from the computational graph using a stop-gradient operator ($\texttt{sg}[\cdot]$ in line~\ref{algline:compute-psi}) to prevent backpropagation through the randomized preference intensity. 
Next, we sample a random rescaling weight $\tilde{w}$ according to the perturbation mechanism described in line~\ref{algline:random-weight}, and use $\tilde{w}$ to rescale the DPO objective before backpropagation.
The remainder of training proceeds identically to standard DPO. 

Intuitively, \privdpo{} pushes the model toward the true objective more aggressively than a standard DPO update with probability greater than $1/2$,  while with a slightly lower probability it moves in the opposite direction, thereby effectively confusing the potential adversary. 
Notably, this randomized rescaling process can be performed in a batched manner during the forward pass, making \privdpo{} a highly practical privacy‑preserving alignment method that is both efficient and scalable.

\vspace{1mm}
\noindent\textbf{Remark on noise injection.} 
In line~\ref{algline:obj-perturb} of Algorithm~\ref{alg:privdpo}, randomness is injected by rescaling the DPO objective, rather than by perturbing the raw preference label as in RR-based input perturbation. 
The perturbed objective is then passed directly to a standard optimizer, avoiding computationally and memory-intensive per-example gradient operations such as clipping or gradient perturbation.

\subsection{Privacy and Utility Analysis}\label{subsec:theoretical-analysis}
We theoretically analyze the privacy and utility guarantees of \privdpo{}, and present their practical implications. Complete proofs are deferred to Appendix \ref{appendix:pridpo-proof} and~\ref{appendix:proof-err-bound}.

\begin{thm}\label{thm:privacy-of-privdpo}
Algorithm \ref{alg:privdpo} satisfies $\epsilon$-preference privacy. 
\end{thm}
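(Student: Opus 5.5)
The plan is to prove the per-example guarantee first and then lift it to the whole training run.

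\textbf{Step 1: rewrite one example's release.} For a single triplet $t=(x,y_w,y_l)$, I will show that the gradient $\boldsymbol{g}_t$ produced in lines~\ref{algline:random-weight}--\ref{algline:get-gradient} of Algorithm~\ref{alg:privdpo} equals $\mathcal{M}_{\rm grad}(t)=-\beta\,\mathcal{M}_{\rm unbias}(\psi)\,\boldsymbol{v}$. Since $\tilde w$ is stop-gradiented, $\nabla_\theta(\tilde w\,\mathcal{L}_{\rm DPO})=\tilde w\cdot(-\beta\psi\boldsymbol{v})$. Substituting the two branches of line~\ref{algline:random-weight}:
\begin{itemize}
\item using $1-\frac{e^\epsilon-2}{e^\epsilon-1}=\frac{1}{e^\epsilon-1}$, the first branch gives $\tilde w\psi=\psi+\frac{1}{e^\epsilon-1}$;
\item using $1+\frac{1}{e^\epsilon-1}=\frac{e^\epsilon}{e^\epsilon-1}$, the second branch gives $\tilde w\psi=\psi-\frac{e^\epsilon}{e^\epsilon-1}=\psi-1-\frac{1}{e^\epsilon-1}$.
\end{itemize}
The branch probabilities are $\frac{e^\epsilon}{e^\epsilon+1}$ and $\frac{1}{e^\epsilon+1}$, exactly as in $\mathcal{M}_{\rm unbias}$.

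\textbf{Step 2: compare preference neighbours.} Let $t'=(x,y_l,y_w)$. By Section~\ref{sec:sensitivity-analysis}, $t'$ has intensity $\psi'=1-\psi$ and axis $\boldsymbol{v}'=-\boldsymbol{v}$. So $\mathcal{M}_{\rm grad}(t')=\beta\,\mathcal{M}_{\rm unbias}(1-\psi)\,\boldsymbol{v}$, and I will write both releases as scalar multiples of the fixed vector $-\beta\boldsymbol{v}$.

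\begin{itemize}
\item For $t$, the coefficient is $a:=\psi+\frac{1}{e^\epsilon-1}$ with probability $\frac{e^\epsilon}{e^\epsilon+1}$, and $b:=\psi-1-\frac{1}{e^\epsilon-1}$ with probability $\frac{1}{e^\epsilon+1}$.
\item For $t'$, the coefficient is $-\mathcal{M}_{\rm unbias}(1-\psi)$. With probability $\frac{e^\epsilon}{e^\epsilon+1}$ it equals $-(1-\psi+\frac{1}{e^\epsilon-1})=b$. With probability $\frac{1}{e^\epsilon+1}$ it equals $-(1-\psi-1-\frac{1}{e^\epsilon-1})=a$.
\end{itemize}

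Both outputs are therefore supported on the same two points $\{-\beta a\boldsymbol{v},-\beta b\boldsymbol{v}\}$, with the probabilities swapped. Every likelihood ratio is $e^{\pm\epsilon}$, which gives \eqref{eq:defn-distp}. The support is determined by the public text and the current $\theta$, so the released gradient is all the adversary sees about $t$.

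Degenerate cases need a small remark, since the two atoms might coincide. If $\boldsymbol{v}=0$, or if $a=b$ (impossible, since $a-b=\frac{e^\epsilon+1}{e^\epsilon-1}>0$), the output is the same point under both inputs and the ratio is $1$. For $\epsilon>0$ the weights are finite whenever $\psi\in(0,1)$, which holds because the sigmoid never reaches $0$ or $1$.

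\textbf{Step 3: lift to the whole algorithm.} The per-example mechanism is $\epsilon$-preference private conditionally on the current $\theta$, and $\theta$ is a function of earlier releases. I will use this to invoke the dataset-level argument after Definition~\ref{def:centralized-pref-priv}, or equivalently Proposition~\ref{prop:seq-composition} with $\epsilon_1=\epsilon$ for the affected example and $\epsilon$-free (ratio $1$) steps for the unaffected ones.

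Concretely, fix neighbouring datasets that differ at index $k$:
\begin{itemize}
\item Before example $k$ is processed, the transcript distributions are identical.
\item At step $k$, conditional on the past, the ratio is at most $e^\epsilon$.
\item After step $k$, the conditional distributions given the transcript are identical.
\end{itemize}
Multiplying the conditional densities gives the $e^\epsilon$ bound for the full transcript.

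Two details remain. The random permutation is independent of the labels, so I condition on it. Mini-batch averaging and the parameter updates are post-processing by Proposition~\ref{corollary:transformation-inv}, because they use only the released $\boldsymbol{g}_t$ together with public data.

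\textbf{Main obstacle.} The delicate part is justifying that the adversary's view is fully captured by the transcript of the per-example $\boldsymbol{g}_t$. The algorithm releases $\theta$, which is computed from batch averages, so its view is at most that transcript. The objects one conditions on, namely $\psi$, $\boldsymbol{v}$, and $\theta$, must also not leak the label beyond what is accounted for. Here I would argue that $\psi$ itself is never released, only the product $\tilde w\psi$. I would also rely on the fact that the sets of possible atoms coincide exactly across neighbours, so no support mismatch arises.
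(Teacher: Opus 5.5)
Your proposal is correct and follows essentially the same route as the paper. The paper also writes the per-example release as one of the two atoms $-\beta(\psi+\frac{1}{e^\epsilon-1})\boldsymbol{v}$ and $-\beta(\psi-1-\frac{1}{e^\epsilon-1})\boldsymbol{v}$, uses $\psi'=1-\psi$ and $\boldsymbol{v}'=-\boldsymbol{v}$ to show that $t$ and $t'$ assign probabilities $\frac{e^\epsilon}{e^\epsilon+1}$ and $\frac{1}{e^\epsilon+1}$ to these atoms in swapped order (with zero mass elsewhere), and then lifts the guarantee to Algorithm~\ref{alg:privdpo}. Your Step~3 spells out the before/at/after argument that the paper compresses into an appeal to transformation invariance (Proposition~\ref{corollary:transformation-inv}) and its dataset-level remark.
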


\begin{proof}[Proof sketch] Let $\mathcal{T}$ denote the space of DPO triplets. 
Consider an arbitrary DPO training record $t=(x,y_w,y_l)\in\mathcal{T}$, with $y_w \succ y_l$. 
Let $t'=(x,y_l,y_w)$ be its neighboring record, differing only in the preference signal, i.e., with $y_w \prec y_l$. 
Define the mechanism $\mathcal{M}: \mathcal{T}\mapsto \mathcal{R}$ which maps a DPO training example to its privatized gradient, corresponding to the randomized sub-process in lines~\ref{algline:compute-psi}-\ref{algline:get-gradient} of Algorithm~\ref{alg:privdpo}. 
By construction, \privdpo{} ensures that
\begin{align*}
    e^{-\epsilon}  \leq \frac{\Pr\left[\mathcal{M}(t)=r\right]}{\Pr\left[\mathcal{M}(t')=r\right]} \leq e^\epsilon
\end{align*}
for any $r\in \mathcal{R}$, thereby establishing the indistinguishability property of $\mathcal{M}$. 
By the transformation invariance property (Proposition~\ref{corollary:transformation-inv}), Algorithm~\ref{alg:privdpo} inherits this guarantee. The theorem follows. 
\end{proof}

\noindent\textbf{Remark.} Under the single-epoch protocol of Algorithm~\ref{alg:privdpo}, where the dataset is randomly permuted once and partitioned into disjoint mini-batches such that each example is used exactly once, Algorithm~\ref{alg:privdpo} satisfies dataset-level $\epsilon$-preference privacy.%

All training experiments in this paper use one epoch.
For neighboring datasets differing in one preference annotation, only the affected mini-batch differs; earlier releases are identical, and later updates are post-processing of the private release and unchanged examples.
Thus, no within-epoch composition is needed.
This differs from subsampling-based DP-SGD accounting~\cite{abadi2016deep,mironov2019r}, where examples may be selected repeatedly and privacy loss must be accumulated through subsampling and composition.

For multiple epochs, the same example may be accessed multiple times, and the privacy cost composes linearly by Proposition~\ref{prop:seq-composition}.
We use linear composition because our guarantee is pure $\epsilon$-preference privacy; tighter accounting for approximate or \Renyi{} DP~\cite{mironov2017renyi,mironov2019r} is not directly applicable in our setting.

Beyond satisfying $\epsilon$-preference privacy, \privdpo{} preserves key properties for maintaining alignment quality. The following lemma shows that the gradient derived from the randomized DPO objective is an unbiased estimator of the original DPO gradient. 
    
\begin{lemma}[Unbiasedness]\label{lemma:unbiasedness}
The gradient computed from the randomized DPO objective in line~\ref{algline:get-gradient} of Algorithm~\ref{alg:privdpo} is an unbiased estimator of the true DPO gradient:
\begin{align*}
    \mathbb{E}\left[ \nabla_\theta \tilde{\mathcal{L}}_t \right] = \nabla_\theta \mathcal{L}_{\rm DPO} (t;\pi_\theta,\pi_{\rm ref}).
\end{align*}
\end{lemma}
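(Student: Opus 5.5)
Since $\psi$ is detached by the stop-gradient in line~\ref{algline:compute-psi} and $u$ is drawn independently of $\theta$, the weight $\tilde{w}$ is a scalar constant with respect to $\theta$. Hence $\nabla_\theta \tilde{\mathcal{L}}_t = \tilde{w}\,\nabla_\theta \mathcal{L}_{\rm DPO}(t;\pi_\theta;\pi_{\rm ref})$, and because $\nabla_\theta \mathcal{L}_{\rm DPO}$ is deterministic given $\theta$ and $t$, linearity of expectation gives $\mathbb{E}[\nabla_\theta \tilde{\mathcal{L}}_t] = \mathbb{E}[\tilde{w}]\cdot\nabla_\theta\mathcal{L}_{\rm DPO}$. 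The lemma therefore reduces to showing $\mathbb{E}[\tilde{w}]=1$, where the expectation is over $u$ with $\theta$ (and hence $\psi$) held fixed.

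To see that this is the right target, we identify $\tilde{w}$ with $\mathcal{M}_{\rm unbias}(\psi)/\psi$. Using $1-\frac{e^\epsilon-2}{e^\epsilon-1}=\frac{1}{e^\epsilon-1}$ and $1+\frac{1}{e^\epsilon-1}=\frac{e^\epsilon}{e^\epsilon-1}$, the two branches of line~\ref{algline:random-weight} become
\[
\tilde{w}=\frac{1}{\psi}\Big(\psi+\tfrac{1}{e^\epsilon-1}\Big) \quad\text{and}\quad \tilde{w}=\frac{1}{\psi}\Big(\psi-1-\tfrac{1}{e^\epsilon-1}\Big).
\]
These branches occur with probabilities $\frac{e^\epsilon}{e^\epsilon+1}$ and $\frac{1}{e^\epsilon+1}$, respectively. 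So $\tilde{w}=\mathcal{M}_{\rm unbias}(\psi)/\psi$ in distribution, and $\mathbb{E}[\tilde{w}]=\mathbb{E}[\mathcal{M}_{\rm unbias}(\psi)]/\psi$.

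Next we compute that expectation directly:
\[
\frac{e^\epsilon}{e^\epsilon+1}\Big(\psi+\tfrac{1}{e^\epsilon-1}\Big)+\frac{1}{e^\epsilon+1}\Big(\psi-\tfrac{e^\epsilon}{e^\epsilon-1}\Big) = \psi + \frac{1}{e^\epsilon+1}\Big(\tfrac{e^\epsilon}{e^\epsilon-1}-\tfrac{e^\epsilon}{e^\epsilon-1}\Big)=\psi.
\]
Thus $\mathbb{E}[\tilde{w}]=1$. Equivalently, using the rescaling identity from Section~\ref{subsec:algorithm-design}, $\mathbb{E}[\nabla_\theta\tilde{\mathcal{L}}_t]=-\beta\,\mathbb{E}[\mathcal{M}_{\rm unbias}(\psi)]\,\boldsymbol{v}=-\beta\psi\boldsymbol{v}=\nabla_\theta\mathcal{L}_{\rm DPO}$.

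The only real obstacle is being careful about the stop-gradient and well-definedness. We must argue that $\tilde{w}$ contributes no gradient term, which is exactly what $\texttt{sg}[\cdot]$ guarantees. We also need $\psi>0$, which holds because $\psi$ is a sigmoid value, so the division by $\psi$ is well defined. Finally, we need $\epsilon>0$ so that $e^\epsilon-1\neq 0$. With these checks in place, the remainder is the routine algebra above.
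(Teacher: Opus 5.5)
Your proof is correct and follows the same route as the paper. The stop-gradient makes $\tilde{w}$ a constant with respect to $\theta$, so $\mathbb{E}[\nabla_\theta\tilde{\mathcal{L}}_t]=\mathbb{E}[\tilde{w}\mid\psi]\,\nabla_\theta\mathcal{L}_{\rm DPO}$, and everything reduces to $\mathbb{E}[\tilde{w}\mid\psi]=1$; you go further than the paper's sketch only by explicitly verifying this identity via $\tilde{w}=\mathcal{M}_{\rm unbias}(\psi)/\psi$, which the paper merely asserts.
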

\begin{proof}[Proof sketch]
    Note that $\mathbb{E}\left[ \tilde{w} \mid \psi \right]=1$. Therefore, we have 
    \(\mathbb{E}[\nabla_\theta\tilde{\mathcal{L}}_t] = \mathbb{E}[{\tilde{w}}] \nabla_\theta\mathcal{L}_{\rm DPO}(t;\pi_\theta;\pi_{\rm ref})=\nabla_\theta\mathcal{L}_{\rm DPO}(t;\pi_\theta;\pi_{\rm ref}).\)
\end{proof}

The unbiasedness property implies that, in expectation, \privdpo{} performs LLM parameter updates identical to those of standard DPO. 
Therefore, with an appropriate batch size, the batch-averaged gradient (line~\ref{algline:agg-grad}) concentrates around the non-private DPO gradient. 
The following theorem formalizes this intuition with a concentration bound on the gradient deviation.

\begin{thm}[Error Bound of \privdpo{}]\label{thm:privdpo-grad-bound}
    Let $\boldsymbol{g}$ denote the batch-averaged gradient of \privdpo{} (line \ref{algline:agg-grad} in Algorithm \ref{alg:privdpo}) and $\boldsymbol{g}^*$ the corresponding non-private DPO gradient. 
    For the $i$-th training example $(x,y_w,y_l)$ in the batch, define $\boldsymbol{v}_i := \nabla_{\theta} \log\pi_{\theta}(y_w|x) - \nabla_{\theta} \log\pi_{\theta}(y_l|x)$.
    For any unit direction $\boldsymbol{u}\in\mathbb{S}^{d-1}$, define the directional deviation as 
    \[ {\rm Err}_{\boldsymbol{u}}:=\vert\boldsymbol{u}^\top (\boldsymbol{g}-\boldsymbol{g}^*)\vert. \] 
    Let $m$ be the batch size, then  with probability at least $1-\gamma$,
    \begin{align*}
        \mathrm{Err}_{\boldsymbol{u}} \leq &\frac{\beta}{m} \left( \sqrt{\frac{2 e^\epsilon}{(e^\epsilon-1)^2} \left( \sum_{i=1}^{m} (\boldsymbol{u}^\top \boldsymbol{v}_i)^2 \right) \log\frac{2}{\gamma}}\right.\\
     &\left.\quad\qquad\qquad +  \frac{e^\epsilon}{e^\epsilon-1} \cdot \frac{ \max_i\left\{\vert \boldsymbol{u}^\top \boldsymbol{v}_i \vert\right\} }{3} \log\frac{2}{\gamma} \right).
    \end{align*}
\end{thm}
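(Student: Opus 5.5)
The plan is to write the deviation as a sum of independent, bounded, zero-mean scalars and apply Bernstein's inequality.

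\textbf{Reduction to scalar noise.} For the $i$-th example, line~\ref{algline:get-gradient} gives $\boldsymbol{g}_i=\tilde w_i\nabla_\theta\mathcal{L}_{\rm DPO}=-\beta\tilde w_i\psi_i\boldsymbol{v}_i$, because $\tilde w_i$ is a stop-gradient constant. The non-private gradient is $-\beta\psi_i\boldsymbol{v}_i$. Write $\tilde\psi_i:=\tilde w_i\psi_i$. Comparing with line~\ref{algline:random-weight}, $\tilde\psi_i$ is exactly $\mathcal{M}_{\rm unbias}(\psi_i)$: with probability $p:=e^\epsilon/(e^\epsilon+1)$ it equals $\psi_i+\frac{1}{e^\epsilon-1}$, and otherwise it equals $\psi_i-\frac{e^\epsilon}{e^\epsilon-1}$. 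Define $\xi_i:=\tilde\psi_i-\psi_i$. Then
\[
\boldsymbol{u}^\top(\boldsymbol{g}-\boldsymbol{g}^*)=-\frac{\beta}{m}\sum_{i=1}^m \xi_i\,(\boldsymbol{u}^\top\boldsymbol{v}_i).
\]
Condition on the current parameters $\theta$, so that $\boldsymbol{v}_i$ and $\psi_i$ are fixed. The $\xi_i$ are then independent, since each $u$ is sampled separately, and their law does not depend on $\psi_i$.

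\textbf{Moments and range.} By Lemma~\ref{lemma:unbiasedness}, $\mathbb{E}[\xi_i]=0$; directly, $p\cdot\frac{1}{e^\epsilon-1}-(1-p)\frac{e^\epsilon}{e^\epsilon-1}=0$. The variance is
\[
\mathbb{E}[\xi_i^2]=\frac{e^\epsilon}{e^\epsilon+1}\cdot\frac{1}{(e^\epsilon-1)^2}+\frac{1}{e^\epsilon+1}\cdot\frac{e^{2\epsilon}}{(e^\epsilon-1)^2}=\frac{e^\epsilon}{(e^\epsilon-1)^2}.
\]
The range is $|\xi_i|\le \frac{e^\epsilon}{e^\epsilon-1}$. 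Set $Z_i:=\xi_i\,\boldsymbol{u}^\top\boldsymbol{v}_i$. These are independent and zero-mean, with
\[
|Z_i|\le M:=\frac{e^\epsilon}{e^\epsilon-1}\max_i|\boldsymbol{u}^\top\boldsymbol{v}_i|,\qquad \sum_i\mathbb{E}Z_i^2=\frac{e^\epsilon}{(e^\epsilon-1)^2}\sum_i(\boldsymbol{u}^\top\boldsymbol{v}_i)^2=:V.
\]

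\textbf{Bernstein.} The two-sided Bernstein inequality gives $\Pr[|\sum Z_i|\ge s]\le 2\exp\bigl(-\frac{s^2}{2(V+Ms/3)}\bigr)$. Setting the right-hand side to $\gamma$ and solving the quadratic in $s$ yields the standard inverted form: with probability at least $1-\gamma$,
\[
\Bigl|\sum_i Z_i\Bigr|\le\sqrt{2V\log(2/\gamma)}+\frac{M}{3}\log(2/\gamma).
\]
The inversion uses $\sqrt{a+b}\le\sqrt a+\sqrt b$ on the root $s=\frac{M L}{3}+\sqrt{(ML/3)^2+2VL}$ with $L=\log(2/\gamma)$. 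If this root is handled crudely the second term comes out as $\frac{2M}{3}L$ rather than $\frac{M}{3}L$, so the constant needs care here. Multiplying by $\beta/m$ and substituting $V$ and $M$ gives exactly the stated bound.

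\textbf{Main obstacle.} The delicate point is justifying independence and fixed coefficients. The $\boldsymbol{v}_i$ and $\psi_i$ depend on $\theta$, which is random through earlier batches. I would therefore state the result conditionally on $\theta$, which is the state at the start of the batch and is measurable with respect to the past. Within a batch, all examples share this $\theta$, and the fresh $u$'s are independent of it, so the conditional Bernstein argument applies as stated.
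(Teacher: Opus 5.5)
Up to the concentration step your argument is the paper's, and it is correct. You use the same conditioning on $\{\psi_i,\boldsymbol{v}_i\}$, the same mean-zero summands, and the same $V=\frac{e^\epsilon}{(e^\epsilon-1)^2}\sum_i(\boldsymbol{u}^\top\boldsymbol{v}_i)^2$ and $M=\frac{e^\epsilon}{e^\epsilon-1}\max_i|\boldsymbol{u}^\top\boldsymbol{v}_i|$. The gap is the inversion step, and care alone cannot close it. Let $L=\log(2/\gamma)$. The tail form $2\exp\bigl(-s^2/(2(V+Ms/3))\bigr)$ is at most $\gamma$ only for $s\ge s^\star=\frac{ML}{3}+\sqrt{(ML/3)^2+2VL}$, and $s^\star>\sqrt{2VL}+\frac{ML}{3}$ whenever $ML>0$. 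So at your threshold this bound is strictly larger than $\gamma$. Applying $\sqrt{a+b}\le\sqrt a+\sqrt b$ to $s^\star$ gives exactly the $\frac{2M}{3}L$ you called crude, and nothing recovers $\frac{M}{3}L$ from this form. The paper's Lemma~\ref{lemma:simplified-bern} makes the same slip. For $t=a+b$ with $a=\sqrt{2VL}$ and $b=ML/3$, one has $\frac12t^2-L\bigl(V+\frac13Mt\bigr)=-\frac12b^2<0$, so its proposed $t$ does not satisfy the inequality it needs.

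The stated constant is still correct, but you need the moment-generating-function (sub-gamma) form of Bernstein rather than its tail form. Since $|Z_i|\le M$ and $\mathbb{E}Z_i=0$, you have $\mathbb{E}Z_i^q\le M^{q-2}\mathbb{E}Z_i^2$. Combined with $q!\ge2\cdot3^{q-2}$, this gives $\log\mathbb{E}e^{\lambda\sum_iZ_i}\le\frac{V\lambda^2}{2(1-c\lambda)}$ for $0<\lambda<1/c$, where $c=M/3$. In Chernoff's bound take $a=\sqrt{2x/V}$ and $\lambda=a/(1+ca)$. A direct computation gives $\lambda(\sqrt{2Vx}+cx)-\frac{V\lambda^2}{2(1-c\lambda)}=x$, hence $\Pr\bigl[\sum_iZ_i\ge\sqrt{2Vx}+cx\bigr]\le e^{-x}$ (Boucheron, Lugosi and Massart, Theorem 2.10). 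Apply this to $\pm Z_i$ with $x=\log(2/\gamma)$ and multiply by $\beta/m$; that yields the theorem exactly as stated.
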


\begin{proof}[Proof sketch]
For any $\boldsymbol{u}\in\mathbb{S}^{d-1}$, the per-example directional error $\boldsymbol{u}^\top (\boldsymbol{g}_i-\boldsymbol{g}_i^*)$ is a mean-zero random variable whose absolute value and variance are bounded. 
Applying a special case of Bernstein's inequality (Lemma~\ref{lemma:simplified-bern}) to the batch-averaged deviation $\mathrm{Err}_{\boldsymbol{u}}$ yields the stated bound.
\end{proof}

Theorem~\ref{thm:privdpo-grad-bound} guarantees that, with high probability, the randomized gradient in \privdpo{} remains close to the original DPO gradient along any fixed direction $\boldsymbol{u}\in\mathbb{S}^{d-1}$. 
Specifically, suppose  
\[\left\Vert\nabla_\theta \log \pi_{\theta} (y_w \mid x)-\nabla_\theta \log \pi_{\theta} (y_l \mid x)\right\Vert_2 \leq G\]
for some constant $G$.
Then for any $\boldsymbol{u}\in\mathbb{S}^{d-1}$ we have $|\boldsymbol{u}^\top \boldsymbol{v}_i| \le G$. 
Consequently, with probability at least $1-\gamma$, the deviation along $\boldsymbol{u}$ is
$\tilde{O}({\beta G}/{(e^\epsilon-1)\sqrt{m}})$, where $\tilde O(\cdot)$ suppresses logarithmic factors in $1/\gamma$. Moreover, since the update is unbiased (Lemma~\ref{lemma:unbiasedness}), Theorem~\ref{thm:privdpo-grad-bound} further implies that the randomized \privdpo{} update concentrates around the true DPO gradient along any direction.

We next compare \privdpo{} with two first-cut mechanisms. The most direct baseline under preference privacy is RR-based input perturbation, which perturbs the raw preference label before training. We therefore first analyze RR, and then discuss DP-SGD as a stronger but more conservative baseline.

Let $\boldsymbol{g}$ and $\boldsymbol{g}^*$ denote the batch-averaged \privdpo{} and non-private DPO gradients, respectively.
Adapting the proof of Theorem~\ref{thm:privdpo-grad-bound} in Appendix \ref{appendix:proof-err-bound} and applying a vector Bernstein inequality~\cite{gross2011recovering}, one can derive a high-probability upper bound of the form 

\begin{align}
\hspace{-2mm}\Vert\boldsymbol{g}-\boldsymbol{g^*}\Vert_2&=O\left(\frac{\beta}{m}\sqrt{\frac{e^\epsilon}{(e^\epsilon-1)^2} \lambda_{\max} \left( \sum_{t=1}^{m} \boldsymbol{v}_t\boldsymbol{v}_t^\top\right)}\right)=O\left(\frac{\beta}{\epsilon} \sqrt{\frac{
    \lambda}{m}}\right),\label{eq:grad-err-privdpo}
\end{align}
where $\lambda_{\max}(\cdot)$ and $\lambda$ denote the largest eigenvalue of a matrix and  $\mathbb{E}[\boldsymbol{v}_t\boldsymbol{v}_t^\top]$, respectively. 
The second equality follows from the approximation $\sum_{t=1}^{m}\boldsymbol{v}_t\boldsymbol{v}_t^\top \approx m \cdot \mathbb{E}[\boldsymbol{v}_t\boldsymbol{v}_t^\top]$, which holds when the empirical covariance concentrates, and on the fact that $\sqrt{e^\epsilon}/(e^\epsilon-1)=O(1/\epsilon)$ for small $\epsilon$. 

Together with the unbiasedness in Lemma~\ref{lemma:unbiasedness}, Eq.~\eqref{eq:grad-err-privdpo} suggests that the PrivDPO update can remain close to the non-private DPO update under low-effective-rank preference-axis covariance, which helps explain its stable empirical behavior. 

\vspace{1mm}
\noindent\textbf{Comparison with RR.} 
The most direct baseline is RR-based input perturbation, which keeps the true preference with probability $p=e^\epsilon/(e^\epsilon+1)$ and flips it with probability $q=1/(e^\epsilon+1)$. For a training example with non-private DPO gradient $-\beta\psi_i\boldsymbol{v}_i$, RR produces $-\beta\psi_i\boldsymbol{v}_i$ with probability $p$ and $\beta(1-\psi_i)\boldsymbol{v}_i$ with probability $q$. 
Applying an analysis similar to Eq.~\eqref{eq:grad-err-privdpo}, we obtain the batch-level deviation includes an additional generally nonzero bias term:
\[
\|\boldsymbol{g}^{\mathrm{RR}}-\boldsymbol{g}^*\|_2
=
O\left(
\frac{\beta}{\epsilon}\sqrt{\frac{\lambda}{m}}
+
\frac{\beta}{e^\epsilon+1}\sqrt{\lambda}
\right),
\]
where $\boldsymbol{g}^{\mathrm{RR}}$ denotes the batch gradient obtained under RR-perturbed preferences and $\boldsymbol{g}^*$ is the non-private DPO gradient. 
More explicitly, if $B_i \sim \mathrm{Bernoulli}(q)$ indicates whether the $i$-th preference is flipped, then
$g^{\mathrm{RR}}-g^*=\frac{\beta}{m}\sum_i(B_i-q)\boldsymbol v_i + \beta q\bar{\boldsymbol v}$, where
$\bar{\boldsymbol v}=\frac{1}{m}\sum_i\boldsymbol v_i$.
The first term is a zero-mean stochastic deviation, while the second term is the bias introduced by input-level RR.

In contrast, \privdpo{} is unbiased, and its stochastic deviation decreases with the batch size. This explains why perturbing the intermediate preference-dependent scalar is preferable to perturbing the raw preference label before DPO optimization.

We note that prior work on label DP~\cite{esmaeili2021antipodes,ghazi2021deep,busa2023label,esfandiari2022label,jiang2024protecting} focuses on supervised learning and is not directly applicable to our setting. These methods can be viewed as extensions of standard RR to deep learning settings, but they rely on assumptions about task structures that do not hold in LLM alignment. 
We therefore use RR as the canonical input-perturbation baseline and discuss these related label-DP methods in Section~\ref{subsec:threat-model} and Section~\ref{sec:related-work}.

\vspace{1mm}
\noindent\textbf{Comparison with DP-SGD.} Let $\tilde{\boldsymbol{g}}$ denote the privatized gradient obtained by DP-SGD, where isotropic Gaussian noise $\mathcal{N}(0,\sigma^2 \mathbb{I}^d)$ with $\sigma=O({\sqrt{\log(1/\delta)}}/{\epsilon})$ is added to the non-private DPO gradient~\cite{abadi2016deep,balle2018improving,mironov2019r}. Applying a standard sub-Gaussian tail bound yields
\begin{align}
    \Vert\tilde{\boldsymbol{g}}-\boldsymbol{g^*}\Vert_2=O\left( \frac{1}{\epsilon} \cdot \frac{\sqrt{d \log(1/\delta)}}{m} \right). \label{eq:grad-err-dpsgd}
\end{align}

Several properties of DPO-based LLM alignment help explain when this comparison is favorable to \privdpo{}. In Eq.~\eqref{eq:grad-err-privdpo}, the quantity $\lambda=\lambda_{\max}(\Sigma)$, where $\Sigma:=\mathbb{E}[\boldsymbol{v}_t\boldsymbol{v}_t^\top]$, reflects both directional concentration and gradient magnitude. Therefore, the comparison with DP-SGD in Eq.~\eqref{eq:grad-err-dpsgd} is most informative when $\Sigma$ has low effective rank and its leading eigenvalues remain bounded.
This condition is consistent with prior observations that LLM fine-tuning gradients often concentrate in low-dimensional intrinsic subspaces~\cite{aghajanyan2021intrinsic,li2018measuring,hulora,ding2023parameter,malladi2023fine,zhang2024dpzero}. We further verify this behavior empirically on Qwen2.5-3B-Instruct, initialized from the PrivSFT checkpoint described in Section~\ref{subsec:privsft}, over Anthropic-HH~\cite{bai2022training} using 32 gradient samples. The top 10 and top 20 components explain 62.5\% and 89.2\% of the empirical spectral energy, respectively, and $\lambda_{\max}/d=1.5\times10^{-4}$. Under such low-effective-rank behavior, $\lambda$ is much smaller than the dimension-dependent factor in Eq.~\eqref{eq:grad-err-dpsgd}.
This measurement is intended as representative empirical support for the low-effective-rank behavior observed in our setting, rather than a universal spectral guarantee across all models and datasets. 

\subsection{Private Supervised Fine-Tuning}\label{subsec:privsft}
In practice, DPO is preceded by a supervised fine-tuning (SFT) stage to improve training stability~\cite{rafailov2023direct}, typically by minimizing the negative log-likelihood of the preferred response $y_w$ given the prompt $x$, i.e., $-\log \pi_{\theta}(y_w\mid x)$. 
This, however, violates preference privacy, as it optimizes exclusively on preferred responses.  

To address this issue, we propose a modification to the SFT stage. 
Our key insight is that the role of SFT prior to DPO is not to directly reinforce preferences, but to adapt the LLM to the context of the alignment dataset~\cite{rafailov2023direct,ethayarajh2024model}. This adaptation allows the model to learn the dataset's semantics and the expected formatting of responses, {without incurring any privacy cost regarding user preference}. 

We use a private supervised fine-tuning method, termed as PrivSFT, that fine-tunes equally on both the preferred $y_w$ and the non-preferred $y_l$ responses for each prompt $x$, preventing preference leakage in SFT. The resulting PrivSFT loss is defined as:
\begin{align*}
    \mathcal{L}_{\rm PrivSFT}= - \frac{1}{2} \Big[\log \pi_{\theta} (y_w \mid x) + \log \pi_{\theta} (y_l \mid x) \Big].
\end{align*}
There is no privacy cost incurred in the private SFT stage, since $y_w$ and $y_l$ are public and they are symmetric in the above loss function (no human preference indicated). In our experiments, we first run one epoch of PrivSFT on the base model to obtain the reference model $\pi_\mathrm{ref}$, and then apply DPO-based alignment methods on $\pi_\mathrm{ref}$.

\section{Experiments}
\label{sec:experiments}
This section presents a comprehensive empirical evaluation of \privdpo{}. 
We first introduce the experimental setup in Section~\ref{subsec:experiment-setting}, then evaluate different aspects of alignment performance of \privdpo{} and its competitors from Section~\ref{subsec:exp-dpsgd} to Section~\ref{subsec:exp-winrate}, assess the robustness of \privdpo{} against an empirical attack in Section~\ref{subsec:mem-gap}, and finally validate the scalability of \privdpo{} in Section~\ref{subsec:exp-scalability}.

\subsection{Setup}\label{subsec:experiment-setting}
All experiments were conducted on a machine with 8$\times$ H20 GPUs, each with 96~GB of GPU memory.
We used PyTorch's Fully Sharded Data Parallel (FSDP)~\cite{zhao2023pytorch} to shard LLM parameters for efficient large-scale training. 
The implementation of \privdpo{} is available at: \url{https://github.com/Yangfan-Jiang/privatedpo}.

\vspace{1mm}
\noindent\textbf{Datasets.}
We use three well-established benchmarks in our experiments, all of which are open-ended text generation tasks. 
\begin{itemize}[leftmargin=*]
    \item \emph{Anthropic-HH}~\cite{bai2022training} is a widely used dataset for LLM alignment, collected by Anthropic to train LLMs that are both helpful and harmless. 
    It contains 161k human-LLM dialogue examples, each consisting of a dialogue context followed by two responses and a human preference label indicating the preferred one.
    \item \emph{TL;DR Summarization}~\cite{stiennon2020learning} contains 92k pairs of summaries for Reddit posts, with human preferences collected by OpenAI. Each example includes a Reddit post and an instruction to summarize it, along with two candidate summaries. Annotators select the summary that better captures the main points of the post. 
    \item \emph{UltraFeedback-Binarized}~\cite{cui2024ultrafeedback} contains 64k dialogue examples annotated for instruction following, truthfulness, honesty, and helpfulness.  
    Each example consists of a dialogue context and two responses, with a binary label indicating the better response. 
\end{itemize}
All datasets provide standard train-test splits; we use the training sets for training and the test sets for evaluating performance.

\vspace{1mm}
\noindent\textbf{Models.}  
We use models from three representative open-weight LLM families: Qwen2.5~\cite{qwen2024qwen25technicalreport}, Llama3~\cite{dubey2024llama}, and Pythia~\cite{biderman2023pythia}, ranging from 3B to 32B parameters. All models are publicly available on the Hugging Face platform.

\vspace{1mm}
\noindent\textbf{Hyperparameters.} We follow the hyperparameter settings from the original DPO paper~\cite{rafailov2023direct}. For all experiments, the learning rate is set to $5\times 10^{-7}$, the DPO hyperparameter $\beta$ to $0.1$, and the batch size to $64$, with 150 warm-up steps using a linear learning rate schedule. We do not tune the hyperparameters to favor any particular experimental setting. All models are trained using AdamW with the default settings provided by PyTorch. 
All experiments use the same single-epoch protocol as Algorithm~\ref{alg:privdpo}, i.e., one random permutation of the dataset followed by disjoint mini-batch processing, which is also consistent with~\cite{rafailov2023direct}.

\vspace{1mm}
\noindent\textbf{Metrics.} We evaluate alignment performance using two categories of metrics, following standard practices in prior DPO-based LLM alignment work~\cite{rafailov2023direct,meng2024simpo}:
\begin{itemize}[leftmargin=*]
    \item {Reward metrics}, including (i) {reward margin} (RM), the average of $\beta \log\frac{\pi_{\theta}(y_w\mid x)}{\pi_\mathrm{ref}(y_w\mid x)} - \beta\log\frac{\pi_{\theta}(y_l\mid x)}{\pi_\mathrm{ref}(y_l\mid x)}$ on the test set; (ii) {reward accuracy} (RA), the proportion of examples where $\log\frac{\pi_{\theta}(y_w\mid x)}{\pi_\mathrm{ref}(y_w\mid x)} > \log\frac{\pi_{\theta}(y_l\mid x)}{\pi_\mathrm{ref}(y_l\mid x)}$; and (iii) the DPO objective value $\mathcal{L}_{\rm DPO}$.
    \item {Generation quality metrics}, which compare the overall alignment quality of LLM-generated responses using an LLM-as-a-judge evaluation pipeline.
\end{itemize}

\vspace{1mm}
\noindent\textbf{Competitors.}
Existing privacy-preserving RLHF methods~\cite{chen2025improved,yu2024privacy,wu2023privately,he2024sample,zhang2025kl,zhang2025towards} typically protect training data with standard DP, effectively reducing to DP-SGD-style training. We therefore include DP-SGD as a standard DP baseline.

Under preference privacy, we compare with randomized response (RR), which randomly flips preference labels and gives biased estimators, and two stronger unbiased scalar perturbation baselines that replace PrivDPO's perturbation mechanism with Duchi's and Piecewise mechanisms~\cite{duchi2018minimax,wang2019collecting}.
These baselines build on our sensitivity analysis but do not exploit the structure of preference intensity identified in our analysis; details are given in Appendix~\ref{appendix:variant-privdpo}.

\vspace{1mm}
\noindent\textbf{Remark on privacy guarantee.} 
Our privacy guarantee applies to the preference annotations used in the DPO training procedure. It does not protect or remove information that may already be encoded in the pretrained or instruction-tuned checkpoint before our training begins. Since all methods in our experiments use the same initialization, datasets, and training pipeline, this limitation does not affect the fairness of the empirical comparison. 
In addition, because the instruction-tuning mixtures of some open-weight models are not fully disclosed, we cannot rule out partial overlap with public alignment datasets. Our evaluation should therefore be interpreted as comparing private DPO optimization methods under a common initialization, rather than auditing whether the base checkpoint has memorized these datasets.

\begin{figure*}[!t]
\centering
\begin{tabular}{ccc}
\multicolumn{3}{c}{\hspace{12mm} \includegraphics[height=7.0mm]{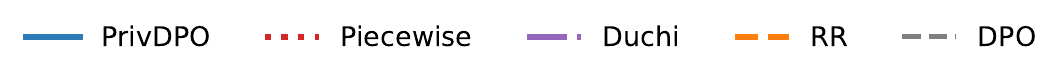}}\vspace{-5mm} \\
\hspace{-0mm}\subfigure[Reward Margin ($\uparrow$)] {\includegraphics[height=0.15\linewidth]{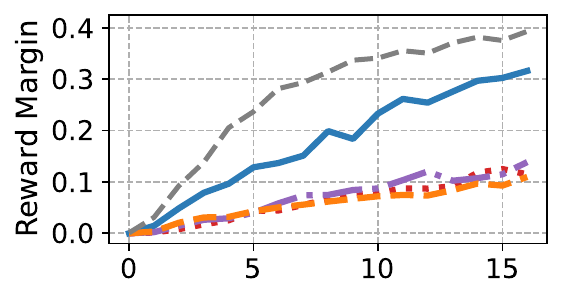}\label{subfig:rward-margin}} &
\hspace{-0mm}\subfigure[Reward Accuracy ($\uparrow$)] {\includegraphics[height=0.15\linewidth]{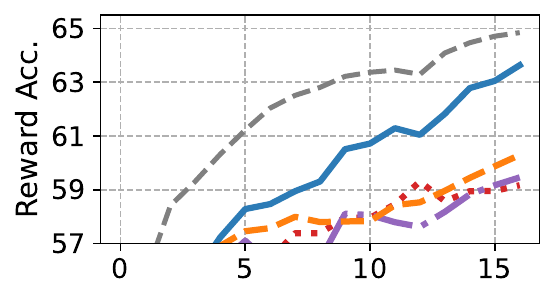}\label{subfig:reward-acc}} &
\hspace{-0mm}\subfigure[DPO Objective ($\downarrow$)]{\includegraphics[height=0.15\linewidth]{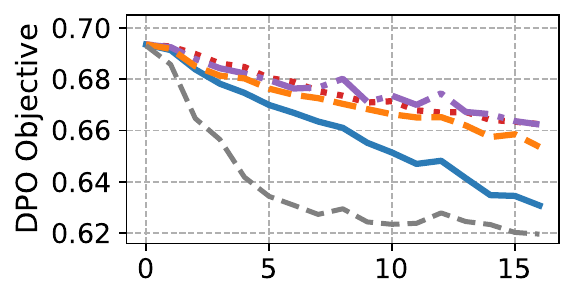}\label{subfig:eval-loss}} 
\end{tabular}
\vspace{-3mm}
\caption{Training dynamics on Anthropic-HH for Qwen2.5-3B-Instruct with privacy budget $\epsilon=1$. The x-axis denotes the number of training samples processed ($\times 10^4$).}
\label{fig:train-dynamics}
\vspace{-1mm}
\end{figure*}

\subsection{Limitations of Existing DP Mechanisms}\label{subsec:exp-dpsgd}
We apply DP-SGD~\cite{abadi2016deep} to DPO training on Qwen2.5-3B-Instruct with the Anthropic-HH dataset, enforcing $(\varepsilon=16,\delta={1}/{10n})$-DP, where $n$ is the number of training examples. 
Following prior work on DP fine-tuning of LLMs~\cite{bao2025unlocking,lilarge,yudifferentially}, we set the gradient clipping threshold to $1$; further details are provided in Appendix~\ref{appendix:exp-dp-sgd}.

After one epoch, the test set DPO objective is $0.688$ (vs. $0.693$ before training and $0.62$ for non-private DPO). 
The reward margin drops to $0.01$ (vs. $0.39$ non-private), and reward accuracy decreases to $55\%$ (vs. $64\%$ non-private). 
Varying the clipping threshold in $\{0.1, 1, 10, 100\}$ yields no clear improvement, indicating that DP-SGD fails to capture meaningful preference signals even at the relatively weak privacy level $\varepsilon=16$. 

Besides undesirable alignment performance, DP-SGD suffers from several limitations that significantly limit its practical deployment for large-scale LLM alignment. 

\vspace{1mm}
\noindent\textbf{Memory barrier.}
Existing large-scale training frameworks, such as FSDP~\cite{zhao2023pytorch}, shard parameters, gradients, and optimizer states across GPUs, which is incompatible with per-example gradient clipping.  
Supporting per-example clipping requires pure data parallelism, forcing each GPU to hold full model weights, gradients, and optimizer states. 
For Qwen2.5-3B, this already exceeds $80$~GB per GPU, while larger models trigger out-of-memory failures on our hardware. 
Similar bottlenecks have been reported in prior work on private LLM fine-tuning~\cite{bao2025unlocking,zhang2024dpzero,tang2025private}.

\vspace{1mm}
\noindent\textbf{Low efficiency.}  
DP-SGD also suffers from poor computational efficiency.  
Per-example gradient operation prevents effective batching, substantially increasing training time. 
On our hardware, DP-SGD training for Qwen2.5-3B takes $6\times$ longer than standard DPO.

Similar results are observed on Llama3.2-3B-Instruct and Pythia-2.8B models, indicating that off-the-shelf DP-SGD is unsuitable for aligning LLMs under current privacy accounting and optimization techniques.  
Our subsequent evaluations therefore compare \privdpo{} mainly with stronger baselines, specifically, variants of \privdpo{} that also satisfy $\epsilon$-preference privacy.


\begin{table*}[!t]
\centering
\caption{Performance comparison on Anthropic-HH dataset}
\vspace{-3mm}
\label{tb:main-hh}
\begin{small}
\begin{tabular}{llccccccccc}
\toprule
\multirow{2}{*}{\textbf{Privacy Budget}} & \multirow{2}{*}{\textbf{Method}} & \multicolumn{3}{c}{\textbf{Pythia-2.8B}}                 & \multicolumn{3}{c}{\textbf{Qwen2.5-3B-Instruct}} & \multicolumn{3}{c}{\textbf{Llama3.2-3B-Instruct}} \\ 
\cmidrule(lr){3-5} \cmidrule(lr){6-8} \cmidrule(lr){9-11}

                         &                          & \textbf{RM} ($\uparrow$)  & \textbf{RA} ($\uparrow$)  & $\mathcal{L}_{\rm DPO}$ ($\downarrow$)   & \textbf{RM} ($\uparrow$)   & \textbf{RA} ($\uparrow$)  & $\mathcal{L}_{\rm DPO}$ ($\downarrow$)  & \textbf{RM} ($\uparrow$)  & \textbf{RA} ($\uparrow$)   & $\mathcal{L}_{\rm DPO}$ ($\downarrow$)   \\ \midrule
\multirow{4}{*}{$\epsilon=0.5$}  & RR          & 0.026  & 54.6\%  & 0.69  &  0.040  &  58.6\%   &  0.68  &  0.045  &  59.0\%  &  0.68 \\
                         & Duchi               & 0.068  & 55.3\%  & 0.69  &  0.083  &  56.9\%   &  0.68  &  0.092  &  56.6\%  &  0.68 \\
                         & Piecewise           & 0.038  & 53.6\%  & 0.69  &  0.041  &  56.1\%   &  0.68  &  0.068  &  56.5\%  &  0.68 \\
                         & \textbf{\privdpo{}} & \textbf{0.108} & \textbf{56.9\%} & \textbf{0.67} & \textbf{0.154} & \textbf{59.4\%} &  \textbf{0.65}  &  \textbf{0.186}  &  \textbf{61.1\%}  & \textbf{0.65} \\ \midrule
\multirow{4}{*}{$\epsilon=1$}   & RR           & 0.063  & 58.8\%  & 0.67   &  0.109  &  60.3\%  &  0.65 &   0.143  &  62.7\%  &  0.64  \\
                         & Duchi               & 0.103  & 56.7\%  & 0.68   &  0.138  &  59.5\%  &  0.66 &   0.172  &  60.9\%  &  0.66  \\
                         & Piecewise           & 0.082  & 56.1\%  & 0.68   &  0.115  &  59.2\%  &  0.66 &   0.145  &  60.6\%  &  0.65  \\
                         & \textbf{\privdpo{}} & \textbf{0.156} & \textbf{59.8\%} & \textbf{0.65} & \textbf{0.316} & \textbf{63.6\%} & \textbf{0.63} & \textbf{0.359} &  \textbf{64.9\%}   &  \textbf{0.62}  \\ \midrule
Non Private              & DPO                 & 0.309  & 62.5\%  & 0.63  &  0.393 & 64.8\% &  0.62 &  0.487 & 66.2\% &  0.60 \\ \bottomrule
\end{tabular}
\end{small}
\end{table*}

\subsection{Main Results}\label{subsec:exp-main}
We compare the alignment performance of \privdpo{} against its privacy-preserving competitors, as well as against the standard DPO approach (with standard, non-private SFT) as presented in~\cite{rafailov2023direct}, to evaluate the reward gap between private and standard versions. 
In each experiment, following common practice in standard DPO training~\cite{rafailov2023direct,meng2024simpo}, we train the LLM for one epoch on the training set and evaluate the reward metrics on the test set. 
Due to the scale of the experiments, we mainly evaluate performance using a single random seed, fixed to $0$ across all runs. Preliminary experiments with additional seeds show low variability, with standard deviation around $1\%$ for RA and $\mathcal{L}_{\rm DPO}$, and less than $5\%$ for RM, and are therefore omitted for brevity. 
We report results under two representative preference privacy budgets, $\epsilon=0.5$ and $\epsilon=1$, which provide strong and practically meaningful privacy guarantees. 

Figure~\ref{fig:train-dynamics} illustrates the training dynamics on the Anthropic-HH dataset using Qwen2.5-3B-Instruct under a preference privacy budget of $\epsilon=1$. We observe that \privdpo{} achieves convergence comparable to the non-private DPO method across all three metrics, while significantly outperforming the privacy-preserving baselines. 

The evaluation results on all three benchmarks are reported in Tables~\ref{tb:main-hh}, \ref{tb:main-tldr} (in Appendix~\ref{appendix:tables}), and \ref{tb:main-shp} (in Appendix~\ref{appendix:tables}), respectively. All three sets of results show similar trends, from which we make the following observations. 
{First}, \privdpo{} consistently and significantly outperforms all privacy-preserving competitors on all metrics, across different privacy budgets, LLMs, and benchmarks. 
Specifically, for the most important metric, reward margin, which reflects the degree to which the model prefers human-preferred responses over non-preferred ones, \privdpo{} typically achieves a $2$-$3\times$ larger margin. This indicates that \privdpo{} enables the model to move more effectively toward human-aligned behavior. We also observe clear improvements in reward accuracy and the DPO objective value, suggesting that \privdpo{} generally converges more stably and learns human preferences more accurately than competing methods. 
Second, \privdpo{} consistently outperforms both advanced baselines, Duchi and Piecewise, across all evaluated settings; meanwhile, all structured mechanisms substantially outperform the naive RR baseline, which suffers severe utility degradation due to biased perturbation. 
{Third}, the performance of \privdpo{} is comparable to that of standard DPO. For a moderate yet strong privacy budget of $\epsilon=1$, \privdpo{} achieves results close to the non-private DPO, suggesting that \privdpo{} preserves alignment utility while providing meaningful preference privacy protection. For a stricter privacy budget of $\epsilon=0.5$, the results remain competitive, although a small performance gap exists, representing the cost of privacy. Improving utility under such stringent privacy guarantees is an interesting direction for future work.

We further evaluate the privacy-utility trade-off on Qwen2.5-3B-Instruct over Anthropic-HH with $\epsilon\in\{0.5,1,1.5,2\}$, as shown in Figure~\ref{fig:vary-eps}. This experiment is limited to one representative model and dataset due to the cost of LLM training. The results show that \privdpo{} improves steadily as the privacy budget increases and remains consistently stronger than all baselines.

\begin{figure*}[!t]
\centering
\begin{tabular}{ccc}
\multicolumn{3}{c}{\hspace{12mm} \includegraphics[height=7.0mm]{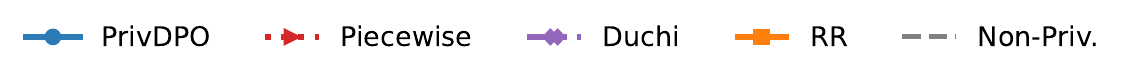}}\vspace{-5mm} \\
\hspace{-0mm}\subfigure[Reward Margin ($\uparrow$)] {\includegraphics[height=0.15\linewidth]{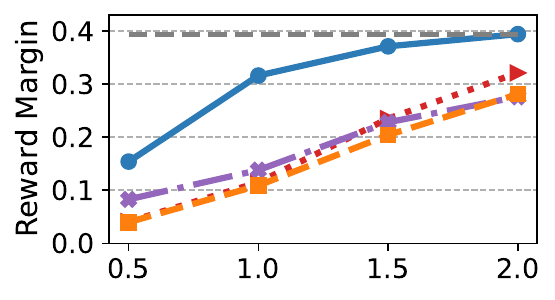}\label{subfig:rward-margin-eps}} &
\hspace{-0mm}\subfigure[Reward Accuracy ($\uparrow$)] {\includegraphics[height=0.15\linewidth]{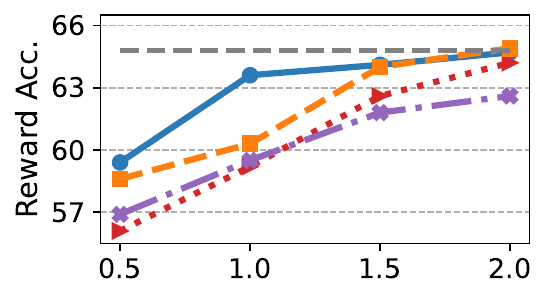}\label{subfig:reward-acc-eps}} &
\hspace{-0mm}\subfigure[DPO Objective ($\downarrow$)]{\includegraphics[height=0.15\linewidth]{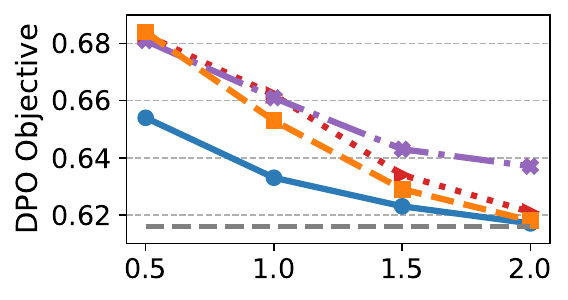}\label{subfig:eval-loss-eps}} 
\end{tabular}
\vspace{-3mm}
\caption{Varying privacy budgets ($\epsilon$) on Anthropic-HH for Qwen2.5-3B-Instruct. The x-axis denotes the privacy budget.}
\label{fig:vary-eps}

\vspace{-1mm}
\end{figure*}

\begin{table}[!t]
\centering
\caption{Win rate of \textbf{\privdpo{}} on Anthropic-HH (\%)}
\vspace{-3mm}
\label{tb:winrate-hh}
\begin{small}
\begin{tabular}{lccc}
\toprule
\textbf{\privdpo{}} vs. & \textbf{Pythia-2.8B} & \textbf{Qwen2.5-3B} & \textbf{Llama3.2-3B} \\ 
\midrule
PrivSFT       &  61.3   &  74.5    & 72.8    \\\midrule
RR            &  57.7   &  69.6    & 63.5    \\
Duchi         &  56.1   &  60.1    & 58.6    \\
Piecewise     &  58.3   &  66.7    & 58.1    \\\midrule
DPO           &  40.2   &  48.7    & 42.7     \\
\bottomrule
\end{tabular}
\vspace{-1mm}
\end{small}
\end{table}


\subsection{Comparison of Text Generation Quality}\label{subsec:exp-winrate}
We evaluate generation quality using LLM-as-a-judge~\cite{rafailov2023direct,zheng2023judging} win rates at $\epsilon=0.5$. 
Responses are sampled with the {vLLM} framework~\cite{kwon2023efficient} using temperature $0.7$ and repetition penalty $1.1$, and GPT-4.1 judges response pairs following the DPO evaluation pipeline~\cite{rafailov2023direct}. 
Each evaluation input consists of a prompt $x$ and a pair of responses $y$ and $y'$ generated by the LLMs trained with \privdpo{} and its competitors, respectively. 
The judge model determines which response is better according to task-specific criteria. For Anthropic-HH, the evaluation focuses on helpfulness and harmlessness; for TL;DR, on precision and conciseness; and for UltraFeedback, on instruction-following, truthfulness, honesty, and overall helpfulness.

The win rate comparisons on all three datasets are reported in Tables~\ref{tb:winrate-hh}, \ref{tb:winrate-tldr} (in Appendix~\ref{appendix:tables}), and \ref{tb:winrate-ultra} (in Appendix~\ref{appendix:tables}). The observed trends are consistent with the reward-based results presented in the previous section, from which we make the following observations. 
First, \privdpo{} significantly outperforms PrivSFT, indicating that it effectively learns human preferences from the DPO dataset while provably protecting individual preference information. Second, \privdpo{} demonstrates a clear advantage over all competitors, achieving win rate around or above $58\%$ in most cases. Finally, responses generated by LLMs trained by \privdpo{} are comparable to those from the standard DPO method, achieving competitive win rates above $47\%$ in most cases and exceeding $40\%$ across all cases. 

\subsection{Evaluation of Memorization Advantage}
\label{subsec:mem-gap}

\begin{table}[!t]
\centering
\caption{Empirical memorization advantage (\%)}
\vspace{-3mm}
\label{tb:memadv}
\begin{small}
\begin{tabular}{lccc}
\toprule
{Methods} & Anthropic-{HH} & {TL;DR}  & {UltraFeedback} \\ 
\midrule
\privdpo{}   &   0.01   &  0.51  &   0.85   \\
\midrule
RR    &   0.17   &  0.55  &   0.87   \\
Duchi     &   0.21   &  0.18  &   0.14   \\
Piecewise     &   0.15   &  0.39  &   0.06   \\
\midrule
DPO      &  9.76    &  7.36   &    9.58  \\
\bottomrule
\end{tabular}
\vspace{-2mm}
\end{small}
\end{table}

We include a simple empirical attack as a sanity check to evaluate whether alignment training (via DPO or \privdpo{}) makes preference labels more exploitable on training examples. 
The attacker predicts the preferred response by comparing model-derived scores for the two candidates; implementation details are in Appendix~\ref{appendix:attack}. 

Since preference signals may be partially inferred even without access to the data, raw attack accuracy is not interpreted as privacy leakage. 
Instead, we report the {memorization advantage} (denoted {MemAdv}) that isolates the attacker's \emph{additional advantage} on training pairs beyond prior and generalization. 
Let $\pi_{\theta}$ denote the aligned model and $\pi_\mathrm{prior}$ a prior model not trained on preference labels (the base pretrained model in our experiments). 
Let $A(\pi,S)$ be the attack success rate on split $S\in\{\mathrm{train},\mathrm{test}\}$. We define MemAdv as
\begin{align*}
(A(\pi_{\theta},\mathrm{train})-A(\pi_\mathrm{prior},\mathrm{train}))\;-\;(A(\pi_{\theta},\mathrm{test})-A(\pi_\mathrm{prior},\mathrm{test})).
\end{align*}

Intuitively, the first term captures additional preference information revealed on seen examples, while the second removes gains attributable to generalization. 
Thus, a larger MemAdv indicates disproportionate improvement on training pairs, 
consistent with memorization of private preference signals; a small MemAdv suggests limited additional advantage on training pairs beyond what is already achievable by the prior and by generalization.

Table~\ref{tb:memadv} reports MemAdv for \privdpo{} and its competitors under privacy budget $\epsilon=0.5$; the $\epsilon=1$ results are reported in Table~\ref{tb:memadv-appendix} in Appendix~\ref{appendix:attack}. 
We omit DP-SGD from the table because it is evaluated only as a standard-DP sanity baseline under a different privacy regime and is not competitive or scalable in our DPO setting. 
Standard DPO shows a clear memorization advantage, whereas all private methods yield substantially smaller values. \privdpo{} remains near zero across datasets and privacy budgets, suggesting limited additional leakage of training preference labels beyond what is already inferable from the prior model and generalization. RR exhibits similar behavior under the same preference-privacy guarantee, while Duchi and Piecewise sometimes produce smaller MemAdv because they inject stronger scalar perturbations than necessary for our preference-privacy notion.

\subsection{Scalability Analysis}\label{subsec:exp-scalability}
We evaluate scalability in terms of alignment performance and training efficiency, with all experiments conducted on the Anthropic-HH dataset with a fixed privacy budget of $\epsilon=0.5$. 
To isolate the effect of model size from confounding factors such as model architecture and pre-training data distribution, we perform the scalability analysis using models from the Qwen2.5 family, which share a similar architecture and pre-training corpus~\cite{qwen2024qwen25technicalreport}.

Results for reward metrics and win rates are reported in Table~\ref{tb:vary-llm-sizes} and Figure~\ref{subfig:win-rate-scalability}, respectively. 
The results show that \privdpo{} scales effectively to models with up to 32B parameters, consistently outperforming the representative RR baseline and achieving win rates above $70\%$ against the PrivSFT method, indicating strong alignment performance and meaningful preference learning.

For the training efficiency analysis, we compare the training time of \privdpo{} with standard DPO. 
Since \privdpo{} only perturbs the DPO objective, it can serve as a direct drop-in replacement within any accelerated LLM training framework, benefiting from the underlying optimizations. 
The results in Figure~\ref{subfig:training-time} demonstrate that \privdpo{} achieves preference privacy while introducing only a negligible additional cost in training time.

\begin{table}[!t]
\centering
\caption{Performance on Anthropic-HH across model sizes}
\vspace{-2mm}
\label{tb:vary-llm-sizes}
\begin{small}
\begin{tabular}{llccc}
\toprule
\textbf{Model Size} & \textbf{Method} & \textbf{RM} ($\uparrow$)  & \textbf{RA} ($\uparrow$)  & $\mathcal{L}_{\rm DPO}$ ($\downarrow$) \\ 
\midrule
\multirow{2}{*}{3B}      & RR         & 0.040   & 58.6\%   &  0.68 \\
     & \textbf{\privdpo{}}            & \textbf{0.154}   & \textbf{59.4\%}   &  \textbf{0.65} \\\midrule
\multirow{2}{*}{7B}      & RR         & 0.085   & 58.9\%   &  0.67 \\
     & \textbf{\privdpo{}}            & \textbf{0.383}   & \textbf{63.2\%}   &  \textbf{0.64} \\\midrule
\multirow{2}{*}{14B}      & RR        & 0.081   & 59.1\%   &  0.67 \\
     & \textbf{\privdpo{}}            & \textbf{0.388}   & \textbf{65.6\%}   &  \textbf{0.62} \\\midrule
\multirow{2}{*}{32B}      & RR        & 0.056   & 55.5\%   &  0.68 \\
     & \textbf{\privdpo{}}            & \textbf{0.358}   & \textbf{65.2\%}   &  \textbf{0.62} \\
\bottomrule
\end{tabular}
\vspace{-1mm}
\end{small}
\end{table}

\begin{figure}[!t]
\centering
{
\begin{tabular}{cc}
\hspace{-2mm}\subfigure[Win Rate of \privdpo{}] {\includegraphics[height=0.295\linewidth]{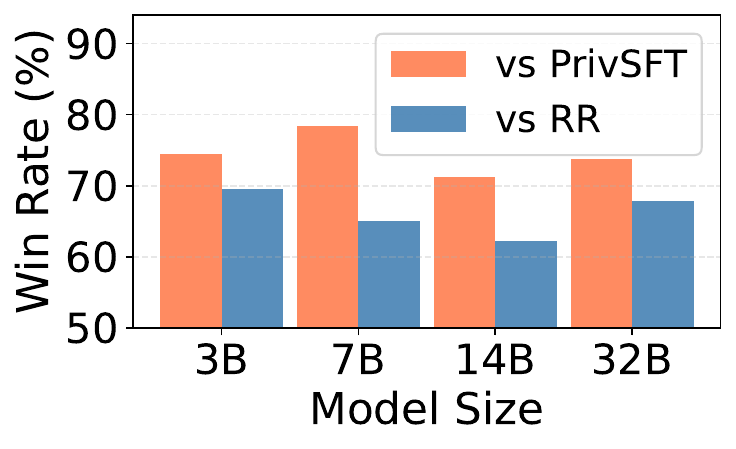}\label{subfig:win-rate-scalability}}  & 
\hspace{-2mm}\subfigure[Training Time (hours)] {\includegraphics[height=0.295\linewidth]{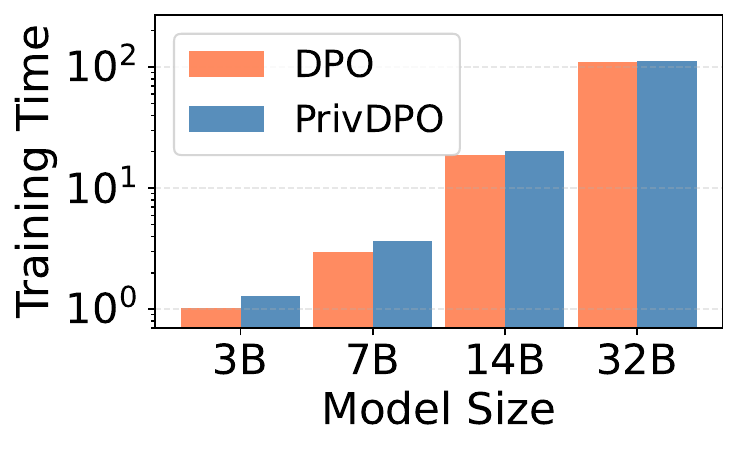}\label{subfig:training-time}}
\end{tabular}
}
\vspace{-2mm}
\caption{Win rate and training time across model sizes.}
\label{fig:scalability}

\vspace{-1mm}
\end{figure}

\section{Related Work}\label{sec:related-work}
\balance
Privacy risks in LLMs have been widely studied~\cite{tramer2024position}, and it is now well established that privacy leakage in LLMs is a practical concern. Such risks include membership inference and memorization~\cite{carlini2021extracting,hayes2025measuring,carlini2022quantifying}, as well as reconstruction attacks~\cite{carlini2019secret,nasr2025scalable}. Recent work has also highlighted privacy issues in LLM alignment~\cite{barbero2025extracting,feng2025exposing}. 
To mitigate these risks, privacy-preserving techniques for LLM training have been extensively studied. 
However, most existing approaches focus on supervised fine-tuning under DP guarantees~\cite{yudifferentially,bao2025unlocking,lilarge}, rather than on LLM alignment. 
Recent efforts have also investigated pre-training LLMs with DP~\cite{sinha2025vaultgemma}, but mainly focus on a relatively small-scale model with around 1B parameters. %

In the context of LLM alignment, only a few recent works study privacy-preserving methods, and most consider DP~\cite{chen2025improved,yu2024privacy,wu2023privately,goel2025differentially,hou2025private} or theoretical aspects~\cite{he2024sample,zhang2025kl,zhang2025towards}. 
Existing DP-based alignment methods~\cite{chen2025improved,yu2024privacy,wu2023privately} are typically applied to small models such as RoBERTa or GPT-2, with million-scale parameters. 
As our analysis and experiments show, these approaches become overly restrictive and impractical when extended to billion-parameter LLMs.  
Theoretical works focus on sample complexity~\cite{he2024sample}, lower bounds~\cite{zhang2025towards}, or convergence~\cite{zhang2025kl}, offering limited practical insights for large-scale LLM alignment. 
In addition, some studies aim to ensure privacy through pre-processing or post-processing rather than during alignment training. 
For example, \citet{yu2024privacy} generates synthetic alignment datasets with DP guarantees but only for specific instruction-following tasks; \citet{hou2025private} uses a DPO-based DP data generator in a federated setting, focusing on synthetic data generation rather than ensuring DP within DPO optimization; and \citet{goel2025differentially} applies model editing at inference time to achieve DP guarantees. 
These approaches are orthogonal to ours, which develops a fully end-to-end privacy-preserving optimization framework for LLM alignment. Overall, existing methods either target narrow task settings or focus on small-scale models, lack strong evidence for more general LLM alignment in practical settings. 

Several recent works study privacy in preference alignment under complementary assumptions. \citet{zhou2025square} formulate private preference optimization using the exponential mechanism, which provides a clean theoretical construction but is difficult to instantiate directly for billion-parameter LLM training. \citet{weng2025improved} analyze private and robust alignment under a finite hypothesis class assumption, where the output model is selected from a finite candidate set rather than optimized directly in a large parameter space. \citet{zhang2025towards} study user-level private PPO-style RLHF, which targets a different alignment paradigm from the DPO setting considered here. These works provide important theoretical insights, but are complementary to our goal of designing a scalable DPO optimizer under preference privacy.

Another closely related line studies preference-level privacy in alignment~\cite{teku2025props}, proposing a PATE-based framework~\cite{esmaeili2021antipodes,papernot2018scalable} combined with label-DP mechanisms~\cite{ghazi2021deep,busa2023label,esfandiari2022label}.  
However, their method requires multi-stage training that limits scalability. 
Moreover, their pipeline uses a subset of the training data for non-private SFT, which is inconsistent with our definition of preference privacy and does not ensure the same $\epsilon$-preference privacy guarantee.  
In contrast, our approach directly modifies the core DPO optimizer to enforce end-to-end preference privacy, enabling efficient training of production-scale LLMs on large real-world datasets.

Finally, several DPO variants have been proposed for LLM alignment, including KTO~\cite{ethayarajh2024model}, IPO~\cite{azar2024general}, and SimPO~\cite{meng2024simpo}. Our privacy analysis depends on the exact form of the standard DPO objective and thus does not directly extend to these variants. Developing privacy-preserving mechanisms for these DPO variants would be an interesting direction for future work.

\section{Conclusion}
\label{sec:conclusion}
This paper shows that aligning production-scale LLMs via DPO under strong preference privacy guarantees is practical. Across multiple alignment benchmarks and model families, we demonstrate that it is possible to achieve performance comparable to standard non-private DPO while enforcing a strict preference privacy constraint. These results are enabled by \privdpo{}, a simple yet effective method that injects carefully calibrated randomness into the DPO process to preserve utility while providing formal preference privacy. 
\privdpo{} represents a practical step toward deploying privacy-preserving RLHF methods in real-world systems. 

Several directions remain open. 
Future work includes extending \privdpo{} to other alignment objectives and post-training methods, moving from record-level to user-level preference privacy, exploring stronger privacy mechanisms and compositions, and adapting preference privacy to multi-turn or interactive feedback settings.

\clearpage

\bibliographystyle{ACM-Reference-Format}
\bibliography{reference}

\appendix
\section{First-Cut Solutions}\label{appendix:first-cut-solution}
We present two DP-based first-cut solutions that can be adapted for aligning LLMs using DPO with preference privacy. 
However, as discussed below, these approaches have fundamental technical limitations, as they were not specifically designed for DPO tasks under preference privacy constraints. 

\vspace{1mm}
\noindent\textbf{DP-SGD.} 
The DP-SGD algorithm~\cite{abadi2016deep} achieves DP by injecting noise to gradients during training, and has been widely applied in deep learning tasks. 
In the context of LLMs, prior work~\cite{bao2025unlocking,yudifferentially,lilarge} has mainly used DP-SGD to fine-tune small models (under 1B parameters, e.g., RoBERTa, GPT-2) on relatively simple datasets (e.g., sentiment classification), which are less representative of production-scale LLM training for open-ended text generation. 
When applied to practical DPO-based LLM alignment, DP-SGD incurs substantial performance degradation, as shown in our theoretical analysis in Section~\ref{subsec:theoretical-analysis} and empirical evaluation in Section~\ref{subsec:exp-dpsgd}. 
This is largely due to DP's pessimistic assumption about the adversary, which enforces an overly strict privacy guarantee and require injecting large noise uniformly across all parameters. 
Such noise leads to a prohibitive $\Omega(\sqrt{d})$ error in the released model parameters~\cite{dinur2003revealing}. %

\vspace{1mm}
\noindent\textbf{Randomized preference flipping.}
This approach is a label-DP-based solution that traces back to the classical randomized response (RR) technique~\cite{warner1965randomized}. 
For simple counting problems, RR can yield unbiased estimates through post-processing~\cite{wang2017locally}. However, in the DPO setting, the final information revealed to a potential adversary is not a simple aggregate count, but rather the LLM parameters (or gradients) that encode preference information in a highly complex manner. 
Naively applying input-level RR followed by standard DPO produces biased gradients. Removing this bias requires exploiting the DPO-specific scalar structure, as \privdpo{} does, rather than a simple post-processing of the flipped preferences.

\section{Proofs}

\allowdisplaybreaks
\subsection{Proof of Theorem \ref{thm:privacy-of-privdpo}}\label{appendix:pridpo-proof}
\begin{proof}%
Denote by $\mathcal{T}$ the space of DPO triplets. 
Consider an arbitrary DPO training example $t=(x,y_w,y_l)\in\mathcal{T}$, and let $t'=(x,y_l,y_w)$ be the preference neighbor of $t$. We first establish the indistinguishability between the gradients computed on $t$ and $t'$.

Let $\mathcal{M}: \mathcal{T}\mapsto \mathcal{R}$ denote the mechanism that maps an input DPO training example to the privatized gradient, corresponding to the randomized sub-process in lines~\ref{algline:compute-psi}-\ref{algline:get-gradient} of Algorithm~\ref{alg:privdpo}. 
Denote by 
$$\boldsymbol{v}:=\nabla_\theta \log \pi_{\theta} (y_w \mid x)-\nabla_\theta \log \pi_{\theta} (y_l \mid x)$$ 
the preference axis associated with the pair of neighboring DPO training records $t$ and $t'$.  
We partition the output range $\mathcal{R}:=\mathbb{R}^d$ into three disjoint subsets as follows:
\begin{align*}
    r_1& :=\left\{-\beta \left(\psi-1-\frac{1}{e^\epsilon-1}\right) \boldsymbol{v}\right\}, \\
    r_2& :=\left\{-\beta\left(\psi+1-\frac{e^\epsilon-2}{e^\epsilon-1}\right) \boldsymbol{v}\right\}, \\
    r_3& :=\mathcal{R}\setminus \left( r_1 \cup r_2 \right).
\end{align*}

We now consider the above output space case-by-case to bound the pdf ratio between $\mathcal{M}(t)$ and $\mathcal{M}(t')$. 

\vspace{1mm}
\noindent\textbf{Case 1: $\mathcal{M}(\cdot)\in r_1$.} For the input DPO record $t$, we have
\begin{align*}
    \nabla_{\theta} \mathcal{L}_{\rm DPO} (t) := \nabla_{\theta} \mathcal{L}_{\rm DPO} (t;\pi_{\theta};\pi_{\rm ref}) = -\beta \psi \boldsymbol{v}.
\end{align*}
The probability of $\mathcal{M}(t) \in r_1$ is:
\begin{align}
     &\Pr\left[ \mathcal{M}(t)=-\beta \left( \psi -1 -\frac{1}{e^\epsilon-1} \right) \boldsymbol{v} \right] \nonumber\\
    = & \Pr\left[ \mathcal{M}(t)= \left( 1 - \frac{1}{\psi} ( 1 +\frac{1}{e^\epsilon-1}) \right) \left(-\beta \psi\boldsymbol{v}\right) \right] \nonumber\\
    = & \Pr\left[ \mathcal{M}(t)= \left( 1 - \frac{1}{\psi} ( 1 +\frac{1}{e^\epsilon-1}) \right) \nabla_{\theta} \mathcal{L}_{\rm DPO} (t) \right] \nonumber\\
    = & \Pr\left[\tilde{w}= 1 - \frac{1}{\psi} ( 1 +\frac{1}{e^\epsilon-1}) \right] = \frac{1}{e^\epsilon+1}.\label{eq:g=r1}
\end{align}

For the neighboring DPO record $t'$, we have:
\begin{align*}
    \psi'=\sigma \left(\beta\log\frac{\pi_{\theta}(y_w\mid x)}{\pi_{\rm ref}(y_w \mid x)} - \beta\log\frac{\pi_{\theta}(y_l\mid x)}{\pi_{\rm ref}(y_l \mid x)}\right)=1-\psi,
\end{align*}
and thus the randomized rescaling weight for input $t'$ is:
\begin{align*}
\tilde{w}' \gets 
    \left\{
    \begin{array}{ll}
       1+\frac{1}{1-\psi}\left(1-\frac{e^\epsilon-2}{e^\epsilon-1}\right), & \text{if } u\leq \frac{e^\epsilon}{e^\epsilon+1},\\
      1-\frac{1}{1-\psi}\left(1+\frac{1}{e^\epsilon-1}\right), & \text{otherwise,}
    \end{array}
    \right.
\end{align*}
where $u$ is sampled uniformly at random from $[0,1]$. 
Note that for $t'=(x,y_l,y_w)$, we have
\begin{align*}
    \nabla_{\theta} \mathcal{L}_{\rm DPO} (t';\pi_{\theta};\pi_{\rm ref}) = -\beta\psi'(-\boldsymbol{v}) = -\beta (1-\psi) (-\boldsymbol{v}).
\end{align*}
Thus, the probability that $\mathcal{M}(t') \in r_1$ is:
\begin{align}
     &\Pr\left[ \mathcal{M}(t')=-\beta \left( \psi -1 -\frac{1}{e^\epsilon-1} \right) \boldsymbol{v} \right] \nonumber\\
    = & \Pr\left[ \mathcal{M}(t')=-\beta \left(  1 - \psi + 1 - \frac{e^\epsilon-2}{e^\epsilon-1} \right) (-\boldsymbol{v}) \right] \nonumber\\
    = & \Pr\left[ \mathcal{M}(t')= \left(  1 + \frac{1}{\psi'} (1-\frac{e^\epsilon-2}{e^\epsilon-1})  \right) (-\beta) (-\boldsymbol{v})\psi' \right] \nonumber\\
    = & \Pr\left[ \mathcal{M}(t')= \left(  1 + \frac{1}{\psi'} (1-\frac{e^\epsilon-2}{e^\epsilon-1}) \right) \nabla_{\theta} \mathcal{L}_{\rm DPO} (t') \right] \nonumber\\
    = & \Pr\left[\tilde{w}'= 1 + \frac{1}{1-\psi} ( 1 - \frac{e^\epsilon-2}{e^\epsilon-1}) \right] = \frac{e^\epsilon}{e^\epsilon+1}.\label{eq:g'=r1}
\end{align}

\vspace{1mm}
\noindent\textbf{Case 2: $\mathcal{M}(\cdot)\in r_2$.} For the input DPO triple $t$, we have
\begin{align}
    & \Pr\left[ \mathcal{M}(t)=-\beta\left(\psi+1-\frac{e^\epsilon-2}{e^\epsilon-1}\right) \boldsymbol{v} \right] \nonumber \\
    =&\Pr\left[ \mathcal{M}(t)=\left(1+\frac{1}{\psi}(1-\frac{e^\epsilon-2}{e^\epsilon-1})\right) \left(-\beta\psi \boldsymbol{v} \right) \right] \nonumber \\
    =&\Pr\left[ \mathcal{M}(t)=\left(1+\frac{1}{\psi}(1-\frac{e^\epsilon-2}{e^\epsilon-1})\right) \nabla_{\theta} \mathcal{L}_{\rm DPO} (t) \right] \nonumber \\
    =&\Pr\left[ \tilde{w}=1+\frac{1}{\psi}(1-\frac{e^\epsilon-2}{e^\epsilon-1}) \right] = \frac{e^\epsilon}{e^\epsilon+1}. \label{eq:g=r2}
\end{align}

For $t'$, we have
\begin{align}
    & \Pr\left[ \mathcal{M}(t')=-\beta\left(\psi+1-\frac{e^\epsilon-2}{e^\epsilon-1}\right) \boldsymbol{v} \right] \nonumber\\
    =&\Pr\left[ \mathcal{M}(t')=-\beta\left(1-\psi-1-\frac{1}{e^\epsilon-1}\right) (-\boldsymbol{v}) \right] \nonumber\\
    =&\Pr\left[ \mathcal{M}(t')=\left(1-\frac{1}{\psi'}(1+\frac{1}{e^\epsilon-1})\right) (-\beta)(-\boldsymbol{v})\psi' \right] \nonumber\\
    =&\Pr\left[ \mathcal{M}(t')=\left(1-\frac{1}{\psi'}(1+\frac{1}{e^\epsilon-1})\right) \nabla_{\theta} \mathcal{L}_{\rm DPO} (t') \right] \nonumber\\
    =&\Pr\left[ \tilde{w}'=1-\frac{1}{1-\psi} (1+\frac{1}{e^\epsilon-1}) \right] = \frac{1}{e^\epsilon+1}. \label{eq:g'=r2}
\end{align}

\noindent\textbf{Case 3: $\mathcal{M}(\cdot)\in r_3$.} It is straightforward to verify that the outputs of $\mathcal{M}(t)$ and $\mathcal{M}(t')$ never fall into $r_3$.  
Thus, for any $r\in r_3$, we have $\Pr[\mathcal{M}(t)=r]=\Pr[\mathcal{M}(t')=r]=0$.

\vspace{1mm}
\noindent\textbf{Putting it all together.} 
Combining (\ref{eq:g=r1}), (\ref{eq:g'=r1}), (\ref{eq:g=r2}), and (\ref{eq:g'=r2}) yields 
\begin{align*}
    e^{-\epsilon}  \leq \frac{\Pr\left[\mathcal{M}(t)=r\right]}{\Pr\left[\mathcal{M}(t')=r\right]} \leq e^\epsilon
\end{align*}
for any $r\in \mathcal{R}$, which establishes the $\epsilon$-preference privacy guarantee of $\mathcal{M}$. 
Finally, by the transformation invariance property of preference privacy (Proposition~\ref{corollary:transformation-inv}), Algorithm \ref{alg:privdpo} also satisfies $\epsilon$-preference privacy. This completes the proof.
\end{proof}

\subsection{Proof of Theorem \ref{thm:privdpo-grad-bound}}\label{appendix:proof-err-bound}
We first present a variant of Bernstein's inequality. The proof of Theorem \ref{thm:privdpo-grad-bound} established on this lemma.
\begin{lemma}[Bernstein's Inequality]\label{lemma:simplified-bern}
    Let $Z_1,\dots,Z_m$ be independent random variables with $\mathbb{E}[Z_i]=0$, $\sum_{i=1}^{m}\mathrm{Var}(Z_i)\leq V$, and $|Z_i|\leq M$. For any $\gamma\in(0,1)$, we have
    \begin{align*}
        \Pr\left[ \left\vert \sum_{i=1}^{m}Z_i \right\vert \leq \sqrt{2V\log\frac{2}{\gamma}}+\frac{M}{3}\log\frac{2}{\gamma} \right] \geq 1-\gamma.
    \end{align*}
\end{lemma}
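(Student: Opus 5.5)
The plan is to derive the two-sided statement from a one-sided sub-gamma tail bound for $S:=\sum_{i=1}^m Z_i$ and $-S$, then take a union bound. Write $L:=\log\frac{2}{\gamma}$. It suffices to show $\Pr[S> \sqrt{2VL}+\frac{M}{3}L]\le e^{-L}=\gamma/2$, and likewise for $-S$. The variables $-Z_i$ satisfy the same hypotheses, so the second bound follows from the first by symmetry.

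\textbf{Step 1 (moment generating function).} For each $i$ and $0<\lambda<3/M$, expand $e^{\lambda Z_i}$ in its power series. Use $\mathbb{E}[Z_i]=0$ and $|\mathbb{E}[Z_i^k]|\le M^{k-2}\mathrm{Var}(Z_i)$ for $k\ge 2$. Together with $k!\ge 2\cdot 3^{k-2}$, this gives
\[
\log\mathbb{E}[e^{\lambda Z_i}]\le \frac{\lambda^2\mathrm{Var}(Z_i)}{2(1-\lambda M/3)}.
\]
By independence, summing over $i$ gives $\log\mathbb{E}[e^{\lambda S}]\le \frac{\lambda^2 V}{2(1-c\lambda)}$ with $c:=M/3$. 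In other words, $S$ is sub-gamma with variance factor $V$ and scale $c$.

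\textbf{Step 2 (Chernoff and inversion).} By Markov's inequality, $\Pr[S>s]\le \exp(-\sup_{\lambda\in(0,1/c)}(\lambda s-\frac{\lambda^2V}{2(1-c\lambda)}))$. This step is the main obstacle. The naive route bounds the exponent by $\frac{s^2}{2(V+cs)}$ and solves the quadratic $s^2\ge 2L(V+cs)$. That yields only $s=\sqrt{2VL}+2cL$, which loses a factor $2$ on the $M$ term. So I will not go through that simplified form. Instead I will use the exact Legendre transform of $\lambda\mapsto \frac{\lambda^2V}{2(1-c\lambda)}$, as in the standard sub-gamma argument (Boucheron--Lugosi--Massart). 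Its value at $s$ is $\frac{V}{c^2}h_1(\frac{cs}{V})$ with $h_1(u)=1+u-\sqrt{1+2u}$. The function $h_1$ is increasing, and its inverse is $h_1^{-1}(t)=\sqrt{2t}+t$. Setting $\frac{V}{c^2}h_1(\frac{cs}{V})=L$ therefore gives $s=\sqrt{2VL}+cL$ exactly. Instead of computing the transform in closed form, I can verify the inequality directly. Choose the explicit multiplier $\lambda^*=\frac{1}{c}\bigl(1-\frac{1}{\sqrt{1+2cs/V}}\bigr)$ with $s=\sqrt{2VL}+cL$, and check by algebra that the Chernoff exponent at $\lambda^*$ equals $L$.

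\textbf{Step 3.} Steps 1 and 2 give $\Pr[S>\sqrt{2VL}+\frac{M}{3}L]\le \gamma/2$, and the same bound holds for $-S$. A union bound then yields $\Pr\bigl[|S|\le \sqrt{2VL}+\frac{M}{3}L\bigr]\ge 1-\gamma$, which is the claim.
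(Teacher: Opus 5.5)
Your proof is correct, and it departs from the paper exactly at the step you flag as the main obstacle.

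The paper quotes the textbook two-sided form $\Pr[|\sum_i Z_i|\ge t]\le 2\exp\bigl(-\tfrac{t^2/2}{V+Mt/3}\bigr)$. It then asserts that $t=\sqrt{2VL}+\frac{M}{3}L$, with $L=\log\frac{2}{\gamma}$, makes the exponent at least $L$. That assertion does not hold. With $c=M/3$ one gets $t^2-2L(V+ct)=-c^2L^2<0$. From that simplified form one can only certify the quadratic root $cL+\sqrt{c^2L^2+2VL}$. This root is strictly larger than $\sqrt{2VL}+cL$ and at most $\sqrt{2VL}+2cL$, which is precisely the factor-$2$ loss you identified.

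Your route is the standard Boucheron--Lugosi--Massart sub-gamma argument. You derive $\log\mathbb{E}[e^{\lambda S}]\le \frac{\lambda^2V}{2(1-c\lambda)}$ from the moment series, then evaluate the Chernoff exponent at the exact optimizer $\lambda^*$. This does deliver the constant $\frac{M}{3}$. With $\tau=c^2L/V$ one has $1+2cs/V=(1+\sqrt{2\tau})^2$ and $1-c\lambda^*=(1+\sqrt{2\tau})^{-1}$. The exponent then collapses to $\frac{V}{c^2}\tau=L$, as you claim.

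The comparison between the two routes:
\begin{itemize}
\item The paper's route is a one-line appeal to a textbook inequality. As written, though, it only supports $\sqrt{2VL}+\frac{2M}{3}L$.
\item Your route costs a few lines of algebra but proves the lemma as stated. It therefore also justifies the $\frac{1}{3}$ constant used in Theorem~\ref{thm:privdpo-grad-bound}.
\end{itemize}

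Two small points to add when you write it up. First, note that $\lambda^*\in(0,1/c)$, which holds because $cs/V>0$, so the MGF bound applies at $\lambda^*$. Second, treat the degenerate cases $V=0$ or $M=0$ separately. There your formula for $\lambda^*$ is undefined, but every $Z_i=0$ almost surely and the claim is trivial.
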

\begin{proof}
    The standard Bernstein's inequality states that for any $t>0$, it holds that
    \begin{align}
        \Pr\left[ \left\vert \sum_{i=1}^{m}Z_i \right\vert \geq t \right]\leq 2\exp\left( -\frac{\frac{1}{2}t^2}{\sum_{i=1}^{m}\mathbb{E}[Z_i^2]+\frac{1}{3}Mt} \right).\label{eq:st-bern}
    \end{align}
    To enforce the right hand side of (\ref{eq:st-bern}) less or equal than $\gamma$, it suffices to let $t$ satisfy
    \begin{align*}
        \frac{\frac{1}{2}t^2}{V+\frac{1}{3}Mt}\geq \log\frac{2}{\gamma}.
    \end{align*}
    Note that $t=\sqrt{2V\log\frac{2}{\gamma}}+\frac{M}{3}\log\frac{2}{\gamma}$ is a valid choice of $t$ that satisfies the above inequality. Substituting this choice of $t$ into (\ref{eq:st-bern}) establishes the lemma.
\end{proof}

We now proceed to prove Theorem~\ref{thm:privdpo-grad-bound}.%
\begin{proof}[Proof of Theorem \ref{thm:privdpo-grad-bound}]
For the $i$th DPO triplet $(x,y_w,y_l)$,
let 
\[\boldsymbol{v}_i:= \nabla_\theta\log\pi_\theta(y_w\mid x) - \nabla_\theta\log\pi_\theta(y_l\mid x),\]
and 
\[\psi_i := \sigma \left(\beta\log\frac{\pi_{\theta}(y_l\mid x)}{\pi_{\rm ref}(y_l \mid x)} - \beta\log\frac{\pi_{\theta}(y_w\mid x)}{\pi_{\rm ref}(y_w \mid x)}\right).\]
Then the privatized gradient of the $i$th DPO triplet can be written as
$\boldsymbol{g}_i=-\beta \psi_i \tilde{w}_i\boldsymbol{v}_i $. 
Let the aggregate \privdpo{} gradients be $\boldsymbol{g}=\frac{1}{m}\sum_{i=1}^{m} \boldsymbol{g}_i$, and denote by $\boldsymbol{g}^*=-\frac{\beta}{m}\sum_{i=1}^{m}\psi_i \boldsymbol{v}_i$ the standard, non-private DPO gradient. We will prove the tail bound for the following error in any direction $\boldsymbol{u}\in \mathbb{S}^{d-1}$:
\[ {\rm Err}_{\boldsymbol{u}} := \boldsymbol{u}^{\top} (\boldsymbol{g} - \boldsymbol{g}^*).\]

Note that $\boldsymbol{g}_i^*=\mathbb{E}\left[ \boldsymbol{g}_i \mid \psi_i,\boldsymbol{v}_i \right]$, and it can be verified that the random weight $\tilde{w}_i$ in line \ref{algline:random-weight} of Algorithm \ref{alg:privdpo} satisfies $\mathbb{E}\left[ \tilde{w}_i \mid \psi_i \right]=1$ for any $i$.
Let $Z_i\in\mathbb{R}$ be the random variable defined as
\begin{align}
    Z_i :={} & \boldsymbol{u}^\top \left( \boldsymbol{g}_i - \boldsymbol{g}^*_i \right) \nonumber\\
    ={} & \boldsymbol{u}^{\top}\left(\boldsymbol{g}_i - \mathbb{E}\left[ \boldsymbol{g}_i \mid \psi_i,\boldsymbol{v}_i \right] \right) \nonumber \\
    ={} & -\beta \psi_i\left( \tilde{w}_i - \mathbb{E}[\tilde{w}_i \mid \psi_i] \right) \boldsymbol{u}^\top \boldsymbol{v}_i.\label{eq:zt}
\end{align}
Then the error of $\boldsymbol{g}$ can be written as the average of random variables $Z_i$ as follows
\[ {\rm Err}_{\boldsymbol{u}} = \frac{1}{m}\sum_{i=1}^{m} Z_i. \]
By construction, $\{Z_i\}$ are independent mean-zero random variables (conditional on $\{\psi_i,\boldsymbol{v}_i\}$)
, which implies that
\begin{align}
    \mathbb{E}\left[ \mathrm{Err}_{\boldsymbol{u}} \right]=\frac{1}{m}\sum_{i=1}^{m}\mathbb{E}\left[Z_i\right]=0.\label{eq:mean-zero-err}
\end{align}

Next, we derive the conditional variance of $Z_i$. First, by the perturbation rule of $\tilde{w}_i$, we have
\begin{align*}
    \mathrm{Var}(\tilde{w} \mid \psi_i) & = \frac{1}{e^\epsilon+1} \cdot \frac{e^\epsilon} {e^\epsilon+1} \cdot \left( \frac{e^\epsilon+1}{e^\epsilon-1} \cdot \frac{1}{\psi_i} \right)^2 \\
    & = \frac{e^\epsilon}{(e^\epsilon-1)^2}\cdot\frac{1}{\psi_i^2}.
\end{align*}
Therefore, the variance of $Z_i$ is given by
\begin{align*}
    \mathrm{Var}(Z_i \mid \psi, \boldsymbol{v}_i)&=\beta^2 \psi^2 \mathrm{Var}(\tilde{w} \mid \psi_i) \left( \boldsymbol{u}^\top \boldsymbol{v}_i \right)^2 \\
    &= \beta^2\frac{e^\epsilon}{(e^\epsilon-1)^2} \left( \boldsymbol{u}^\top \boldsymbol{v}_i \right)^2.
\end{align*}
Accordingly, the variance of 
$Z_t/m$ can be derived as
\begin{align}
 \sum_{i=1}^{m}\mathrm{Var}\left(\frac{Z_i}{m} \middle| \psi_i,\boldsymbol{v}_i \right) \nonumber
    &= 
    \frac{1}{m^2}  \sum_{i=1}^{m}\mathrm{Var}(Z_i \mid \psi_i, \boldsymbol{v}_i) \nonumber\\
    &= \frac{\beta^2 e^\epsilon}{m^2 (e^\epsilon-1)^2} \sum_{i=1}^{m}\left( \boldsymbol{u}^\top \boldsymbol{v}_i \right)^2.\label{eq:var-zt}
\end{align}

Next, we proceed to bound $\vert Z_i\vert$. 
By the probability distribution of $\tilde{w}_i$, we have
\begin{align*}
    \left\vert \psi_i \tilde{w}_i - \psi_i \right\vert \leq \max \left\{ 1-\frac{e^\epsilon-2}{e^\epsilon-1}, 1+\frac{1}{e^\epsilon-1} \right\} = 1+\frac{1}{e^\epsilon-1}.
\end{align*}
Therefore, plug the above bound into (\ref{eq:zt}), we have 
\begin{align}
    \vert Z_i \vert = \beta \cdot \big\vert  \psi_i \tilde{w}_i - \psi_i  \big\vert \cdot\left\vert\boldsymbol{u}^\top \boldsymbol{v}_i \right\vert \leq \beta \left( 1+\frac{1}{e^\epsilon-1}\right)\left\vert \boldsymbol{u}^\top \boldsymbol{v}_i \right\vert.\label{eq:|zi|}
\end{align}
Applying Bernstein's inequality (Lemma \ref{lemma:simplified-bern}) to the random variables 
$\mathrm{Err}_{\boldsymbol{u}}=\sum_{i=1}^{m} Z_i/m$ with values expressed in (\ref{eq:var-zt}) and (\ref{eq:|zi|}), we have that conditioned on $\{\psi_i, \boldsymbol{v}_i\}$, with probability at least $1-\gamma$, it holds that
\begin{align}
     \left\vert \mathrm{Err}_{\boldsymbol{u}} \right\vert \leq &\frac{\beta}{m} \sqrt{\frac{2 e^\epsilon}{(e^\epsilon-1)^2} \left( \sum_{i=1}^{m} (\boldsymbol{u}^\top \boldsymbol{v}_i)^2 \right) \log\frac{2}{\gamma}} \nonumber\\
     &\quad+ \frac{\beta \left( 1+\frac{1}{e^\epsilon-1} \right) \max_i\left\{\vert \boldsymbol{u}^\top \boldsymbol{v}_i \vert\right\} }{3m} \log\frac{2}{\gamma}. \nonumber
\end{align}
This completes the proof.
\end{proof}

\section{Variants of \privdpo{}}
\label{appendix:variant-privdpo}

As shown in Algorithm~\ref{alg:privdpo}, \privdpo{} is a flexible framework that can incorporate various perturbation mechanisms to achieve preference privacy. 
There exist DP mechanisms that provide unbiased perturbation for numerical values, notably Duchi's mechanism \cite{duchi2018minimax} and the Piecewise mechanism \cite{wang2019collecting}, which formed the starting point of our design exploration. However, they assume that the inputs lie in a continuous range, implying a fundamentally different adversary model that requires additional output noise.
We develop \privdpo{} variants that replace its core weight perturbation with these mechanisms, yielding strong competitive baselines. However, applying them in the DPO setting is non‑trivial, as ensuring both privacy and unbiasedness requires formal analysis of DPO's privacy‑leakage surface in Section~\ref{sec:sensitivity-analysis}.

\vspace{1mm}
\noindent\textbf{Duchi's mechanism}~\cite{duchi2018minimax}.
Given the (sensitive) preference intensity $\psi$ and privacy budget $\epsilon$ as the input, Duchi's mechanism first computes the probability 
\[
p = \frac{e^\epsilon - 1}{2 e^\epsilon + 2} \cdot \psi + \frac{1}{2}.
\]
Then it outputs the value $\frac{e^\epsilon + 1}{e^\epsilon - 1}$ with probability $p$, and $-\frac{e^\epsilon + 1}{e^\epsilon - 1}$ with probability $1 - p$.

\vspace{1mm}
\noindent\textbf{Piecewise mechanism~}\cite{wang2019collecting}. Given the preference intensity $\psi$ and the privacy budget $\epsilon$, it defines:
\[
C = \frac{e^{\epsilon/2} + 1}{e^{\epsilon/2} - 1}, \quad 
\ell = \frac{C + 1}{2} \cdot \psi - \frac{C - 1}{2}, \quad 
r = \ell + C - 1.
\]
Then with probability $\frac{e^{\epsilon/2}}{e^{\epsilon/2} + 1}$, it outputs a value sampled uniformly at random from $[\ell, r]$. With the remaining probability, it samples output uniformly from $[-C, C] \setminus [\ell, r]$.

\vspace{1mm}
\noindent\textbf{Remark on privacy semantics.}
While these two mechanisms ensure unbiasedness, they implicitly assume that by altering the preference signal in the input DPO example, the preference intensity $\psi$ can change within $[-1,1]$ arbitrarily, rather than only taking values in $\{\psi,\psi-1\}$. This essentially leads to a somewhat stronger privacy guarantee than $\epsilon$-preference privacy, yet weaker than standard DP. In contrast, our \privdpo{} accurately satisfies $\epsilon$-preference privacy without introducing unnecessary perturbation, thus outperforms all baselines under the same level of privacy budget.


\section{Details of Experimental Evaluation}\label{appendix:detail-exp}

\subsection{DP-SGD Setup}\label{appendix:exp-dp-sgd}
We present detailed privacy accounting and results analysis. 
Since each training example is used exactly once in our setup (which is also the common practice in standard DPO~\cite{rafailov2023direct,meng2024simpo,ethayarajh2024model}), there is no composition or subsampling-based privacy amplification; we therefore adopt the analytical Gaussian framework~\cite{balle2018improving} for tight noise accounting instead of subsampled \Renyi{}-DP approach~\cite{abadi2016deep,zhu2019poission,mironov2019r}, resulting in a Gaussian noise standard deviation of $\sigma=0.37$.

\subsection{Additional Experimental Results}\label{appendix:tables}
Tables~\ref{tb:main-tldr} and~\ref{tb:main-shp} present the experimental results on reward-based metrics for the TL;DR summarization and UltraFeedback benchmarks. These results are consistent with the findings in Section~\ref{subsec:exp-main}, showing that \privdpo{} significantly and consistently outperforms all baselines across configurations. 

Tables~\ref{tb:winrate-tldr} and~\ref{tb:winrate-ultra} report win-rate comparisons on the TL;DR summarization and UltraFeedback benchmarks. The results align with the conclusions in Section~\ref{subsec:exp-winrate}. 
Table~\ref{tb:memadv-appendix} reports MemAdv for \privdpo{} and all privacy-preserving baselines under privacy budget $\epsilon=1$.

\begin{table}[!tbp]
\centering
\caption{Win rate of \textbf{\privdpo{}} on TL;DR summarization (\%)}
\vspace{-3mm}
\label{tb:winrate-tldr}
\begin{small}
\begin{tabular}{lccc}
\toprule
\textbf{\privdpo{}} vs. & \textbf{Pythia-2.8B} & \textbf{Qwen2.5-3B} & \textbf{Llama3.2-3B} \\ 
\midrule
PrivSFT       & 76.7   & 72.2  & 86.3     \\\midrule
RR            & 67.4   & 59.1  & 78.9     \\
Duchi         & 56.9   & 55.6  & 54.7     \\
Piecewise     & 63.1   & 57.7  & 57.4     \\\midrule
DPO           & 48.8   & 43.2  & 47.2     \\
\bottomrule
\end{tabular}
\vspace{-1mm}
\end{small}
\end{table}

\begin{table}[!tbp]
\centering
\caption{Win rate of \textbf{\privdpo{}} on UltraFeedback (\%)}
\vspace{-3mm}
\label{tb:winrate-ultra}
\begin{small}
\begin{tabular}{lccc}
\toprule
\textbf{\privdpo{}} vs. & \textbf{Pythia-2.8B} & \textbf{Qwen2.5-3B} & \textbf{Llama3.2-3B} \\ 
\midrule                         \\
PrivSFT    &   61.3   &  72.2   &  71.5    \\\midrule
RR         &   55.5   &  66.0   &  66.7    \\
Duchi      &   54.8   &  55.7   &  56.5    \\
Piecewise  &   57.5   &  56.3   &  56.4    \\\midrule
DPO        &   49.1   &  49.3   &  49.6    \\
\bottomrule
\end{tabular}
\vspace{-1mm}
\end{small}
\end{table}

\begin{table}[!t]
\centering
\caption{Empirical memorization advantage ($\epsilon=1$)}
\vspace{-3mm}
\label{tb:memadv-appendix}
\begin{small}
\begin{tabular}{lccc}
\toprule
{Methods} & Anthropic-{HH} & {TL;DR}  & {UltraFeedback} \\ 
\midrule
\privdpo{}       &   0.58   &  0.62  &   1.02   \\
\midrule
RR     &   0.98   &  0.68  &   1.21   \\
Duchi      &   0.85   &  0.54  &   0.62   \\
Piecewise     &   0.56   &  0.61  &   0.53   \\
\midrule
DPO      &  9.76    &  7.36   &    9.58  \\
\bottomrule
\end{tabular}
\vspace{-2mm}
\end{small}
\end{table}

\begin{table*}[!t]
\centering
\caption{Performance comparison on TL;DR summarization dataset}
\vspace{-1mm}
\label{tb:main-tldr}
\begin{small}
\begin{tabular}{llccccccccc}
\toprule
\multirow{2}{*}{\textbf{Privacy Budget}} & \multirow{2}{*}{\textbf{Method}} & \multicolumn{3}{c}{\textbf{Pythia-2.8B}}                 & \multicolumn{3}{c}{\textbf{Qwen2.5-3B-Instruct}} & \multicolumn{3}{c}{\textbf{Llama3.2-3B-Instruct}} \\ 
\cmidrule(lr){3-5} \cmidrule(lr){6-8} \cmidrule(lr){9-11}

                         &                          & \textbf{RM} ($\uparrow$)  & \textbf{RA} ($\uparrow$)  & $\mathcal{L}_{\rm DPO}$ ($\downarrow$)   & \textbf{RM} ($\uparrow$)   & \textbf{RA} ($\uparrow$)  & $\mathcal{L}_{\rm DPO}$ ($\downarrow$)  & \textbf{RM} ($\uparrow$)  & \textbf{RA} ($\uparrow$)   & $\mathcal{L}_{\rm DPO}$ ($\downarrow$)   \\ \midrule
\multirow{4}{*}{$\epsilon=0.5$}  & RR          & 0.049  & 59.2\%  & 0.68  &  0.068  &  63.8\%   &  0.66  &  0.094  &  64.9\%  &  0.65 \\
                         & Duchi               & 0.091  & 58.8\%  & 0.67  &  0.215  &  64.9\%   &  0.63  &  0.261  &  65.6\%  &  0.62 \\
                         & Piecewise           & 0.034  & 55.8\%  & 0.69  &  0.175  &  63.7\%   &  0.64  &  0.238  &  65.2\%  &  0.62 \\
                         & \textbf{\privdpo{}} & \textbf{0.174} & \textbf{61.6\%} & \textbf{0.66} & \textbf{0.303} & \textbf{65.7\%} &  \textbf{0.61}  &  \textbf{0.394}  &  \textbf{67.7\%}  & \textbf{0.60} \\ \midrule
\multirow{4}{*}{$\epsilon=1$}   & RR           & 0.107  & 61.3\%  & 0.66  &  0.154  &  66.2\%   &  0.64  &  0.194  &  66.8\%  &  0.62 \\
                         & Duchi               & 0.178  & 61.4\%  & 0.66  &  0.302  &  65.3\%   &  0.63  &  0.394  &  67.1\%  &  0.61 \\
                         & Piecewise           & 0.141  & 61.1\%  & 0.67  &  0.268  &  65.1\%   &  0.63  &  0.363  &  66.7\%  &  0.61 \\
                         & \textbf{\privdpo{}} & \textbf{0.236} & \textbf{62.3\%} & \textbf{0.65} & \textbf{0.396} & \textbf{67.1\%} & \textbf{0.60} & \textbf{0.456} &  \textbf{68.9\%}   &  \textbf{0.59}  \\ \midrule
Non Private              & DPO                 & 0.291  & 62.5\%  & 0.65  &  0.497 & 67.5\% &  0.60 &  0.516 & 70.1\% &  0.57 \\ \bottomrule
\end{tabular}
\vspace{3mm}
\end{small}
\end{table*}

\begin{table*}[!tb]
\centering
\caption{Performance comparison on UltraFeedback dataset}
\vspace{-1mm}
\label{tb:main-shp}
\begin{small}
\begin{tabular}{llccccccccc}
\toprule
\multirow{2}{*}{\textbf{Privacy Budget}} & \multirow{2}{*}{\textbf{Method}} & \multicolumn{3}{c}{\textbf{Pythia-2.8B}}                 & \multicolumn{3}{c}{\textbf{Qwen2.5-3B-Instruct}} & \multicolumn{3}{c}{\textbf{Llama3.2-3B-Instruct}} \\ 
\cmidrule(lr){3-5} \cmidrule(lr){6-8} \cmidrule(lr){9-11}

                         &                          & \textbf{RM} ($\uparrow$)  & \textbf{RA} ($\uparrow$)  & $\mathcal{L}_{\rm DPO}$ ($\downarrow$)   & \textbf{RM} ($\uparrow$)   & \textbf{RA} ($\uparrow$)  & $\mathcal{L}_{\rm DPO}$ ($\downarrow$)  & \textbf{RM} ($\uparrow$)  & \textbf{RA} ($\uparrow$)   & $\mathcal{L}_{\rm DPO}$ ($\downarrow$)   \\ \midrule
\multirow{4}{*}{$\epsilon=0.5$}  & RR          & 0.061  & 58.7\%  & 0.67  &  0.123  &  67.4\%  &  0.64  &  0.111  &  65.9\%  &  0.65 \\
                         & Duchi               & 0.198  & 60.9\%  & 0.66  &  0.286  &  67.5\%  &  0.61  &  0.326  &  65.6\%  &  0.63 \\
                         & Piecewise           & 0.152  & 59.1\%  & 0.67  &  0.265  &  66.7\%  &  0.62  &  0.242  &  65.1\%  &  0.63 \\
                         & \textbf{\privdpo{}} & \textbf{0.282} & \textbf{62.8\%} & \textbf{0.63} & \textbf{0.502} & \textbf{68.2\%} &  \textbf{0.58}  &  \textbf{0.486}  &  \textbf{67.3\%}  & \textbf{0.60} \\ \midrule
\multirow{4}{*}{$\epsilon=1$}   & RR           & 0.153  & 64.5\%  & 0.64  &  0.218  &  70.4\%  & 0.61 &   0.219  &  68.2\%  &  0.61  \\
                         & Duchi               & 0.291  & 63.1\%  & 0.64  &  0.395  &  68.8\%  & 0.59 &   0.401  &  66.7\%  &  0.61  \\
                         & Piecewise           & 0.276  & 63.9\%  & 0.63  &  0.398  &  68.9\%  & 0.59 &   0.362  &  67.6\%  &  0.61  \\
                         & \textbf{\privdpo{}} & \textbf{0.357} & \textbf{65.8\%} & \textbf{0.61} & \textbf{0.538} & \textbf{70.8\%} & \textbf{0.57} & \textbf{0.578} &  \textbf{68.8\%}   &  \textbf{0.58}  \\ \midrule
Non Private              & DPO                 & 0.488  & 66.0\%  & 0.60  &  0.684 & 71.8\% &  0.55 &  0.703 & 71.5\% &  0.55 \\ \bottomrule
\end{tabular}
\end{small}
\end{table*}

\subsection{Attack}\label{appendix:attack}
This subsection describes the empirical attack used to compute the memorization advantage reported in Section~\ref{subsec:mem-gap}. 

\vspace{1mm}
\noindent\textbf{Attacker's knowledge.} 
The attacker is given a prompt $x$ and two candidate responses $(y_1,y_2)$ and aims to predict which response is preferred in the underlying dataset. This attacker corresponds to the natural inference task considered in label-DP-style settings, where features are public and only the label is sensitive.

\vspace{1mm}
\noindent\textbf{Attack intuition.}
The intuition behind the attack is that alignment training increases the relative likelihood of preferred responses compared to non-preferred ones.
If a model has learned and potentially memorized preference labels from training examples, this information may be reflected in the model's scoring of the two responses, making the preferred response easier to identify. 
By comparing how strongly an aligned model distinguishes between the two candidates relative to a prior model that has not seen preference labels, the attack probes whether training introduces additional, example-specific preference information beyond what is already inferable from public text or generalization.

\vspace{1mm}
\noindent\textbf{The attack pipeline.}
For an aligned model $\pi_{\theta}$ trained with DPO or \privdpo{}, we use the model's implicit reward w.r.t. the pre-trained base $\pi_\mathrm{base}$ model to score each candidate response, denoted by
\begin{align*}
    r_{\theta} = \log\pi_{\theta}(y \mid x) - \log\pi_\mathrm{base}(y \mid x).
\end{align*}
The attacker predicts $y_1 \succ y_2$ if $r_{\theta}(x,y_1)>r_{\theta}(x,y_2)$, and vice versa.

For the prior model $\pi_{\mathrm{prior}}$ (which has no associated reference or base model), we use the raw log-likelihood as the attack score
\begin{align*}
    s_\mathrm{prior} = \log \pi_\mathrm{base}(y\mid x),
\end{align*}
and predict preference by comparing $s_{\mathrm{prior}}(x,y_1)$ and $s_{\mathrm{prior}}(x,y_2)$. 

\vspace{1mm}
\noindent\textbf{Attack metric and interpretation.} 
Attack performance is measured using accuracy, i.e., the fraction of examples for which the attacker correctly predicts the dataset preference. 
Note that this attack is not intended to represent a worst-case adversary. Rather, it serves as a concrete lower-bound sanity check that complements the formal $\epsilon$-preference-privacy guarantee. In particular, high raw attack accuracy does not necessarily imply privacy leakage, as preferences may be inferable from public information alone. The MemAdv metric reported in Section~\ref{subsec:mem-gap} isolates the incremental advantage introduced by training and is therefore more informative for measuring memorization of private preference labels. 

\end{document}